\newtoks\bsubfloattoks
\newdimen\bsubfloatht
\newcommand{\bsubfloat}[2][]{%
  \sbox\z@{#2}%
  \ifdim\bsubfloatht<\ht\z@
    \bsubfloatht=\ht\z@
  \fi
  \advance\bsubfloatcount\@ne
  \@namedef{bsubfloat\romannumeral\bsubfloatcount}{%
    \subfloat[#1]{\vbox to\bsubfloatht{\hbox{#2}\vfill}}}%
}
\newcommand{\resetbsubfloat}{\bsubfloatcount\z@\bsubfloatht=\z@}
\DeclareMathOperator{\diam}{diam}
\DeclareMathOperator{\td}{td}
\DeclareMathOperator{\rc}{rc}
\DeclareMathOperator{\src}{src}
\DeclareMathOperator{\rvc}{rvc}
\DeclareMathOperator{\srvc}{srvc}
\newtheorem{theorem}{Theorem}
\newtheorem{lemma}[theorem]{Lemma}
\newtheorem{corollary}[theorem]{Corollary}
\newlist{pcases}{enumerate}{1}
\setlist[pcases]{
  label=\textbf{Case~\arabic*:}\protect\thiscase.~,
  ref=\arabic*,
  align=left,
  labelsep=0pt,
  leftmargin=0pt,
  labelwidth=0pt,
  parsep=0pt
}
\newcommand{\case}[1][]{%
  \if\relax\detokenize{#1}\relax
    \def\thiscase{}%
  \else
    \def\thiscase{~#1}%
  \fi
  \item
}
\title{Further Hardness Results on Rainbow and\\Strong Rainbow Connectivity}
\author{Juho Lauri \\ Department of Mathematics \\ Tampere University of Technology \\ Korkeakoulunkatu 1, 33720 Tampere, Finland \\ \texttt{juho.lauri@tut.fi}}
\date{\today}
\begin{document}
\maketitle

\begin{abstract}
A path in an edge-colored graph is \textit{rainbow} if no two edges of it are colored the same. The graph is said to be \textit{rainbow connected} if there is a rainbow path between every pair of vertices. If there is a rainbow shortest path between every pair of vertices, the graph is \textit{strong rainbow connected}. We consider the complexity of the problem of deciding if a given edge-colored graph is rainbow or strong rainbow connected. These problems are called \textsc{Rainbow connectivity} and \textsc{Strong rainbow connectivity}, respectively. We prove both problems remain $\NP$\hyp{}complete on interval outerplanar graphs and $k$-regular graphs for $k \geq 3$. Previously, no graph class was known where the complexity of the two problems would differ. We show that for block graphs, which form a subclass of chordal graphs, \textsc{Rainbow connectivity} is $\NP$\hyp{}complete while \textsc{Strong rainbow connectivity} is in $\P$. We conclude by considering some tractable special cases, and show for instance that both problems are in $\XP$ when parameterized by tree-depth.

\smallskip
\noindent \textbf{Keywords:} rainbow connectivity, computational complexity
\end{abstract}

\section{Introduction}
Let $G$ be an edge-colored undirected graph that is simple and finite. A path in $G$ is \textit{rainbow} if no two edges of it are colored the same. The graph $G$ is \textit{rainbow connected} if there is a rainbow path between every pair of vertices. If there is a rainbow shortest path between every pair of vertices, $G$ is \textit{strong rainbow connected}. Clearly, a strong rainbow connected graph is also rainbow connected. The minimum number of colors needed to make $G$ rainbow connected is known as the \textit{rainbow connection number} and is denoted by $\rc(G)$. Likewise, the minimum number of colors needed to make $G$ strong rainbow connected is known as the \textit{strong rainbow connection number} and is denoted by $\src(G)$. The concept of rainbow connectivity was introduced by~\citet{Chartrand2008} in 2008, and it has applications in data transfer and networking. The \textit{diameter} of a graph, denoted by $\diam(G)$, is the largest distance between two vertices of $G$. Clearly, $\diam(G)$ is a lower bound for $\rc(G)$. On the other hand, a trivial upper bound for $\rc(G)$ is $m$, where $m$ is the number of edges in $G$. Finally, because each strong rainbow connected graph is also rainbow connected, we have that $\diam(G) \leq \rc(G) \leq \src(G) \leq m$. For less trivial bounds and more, we refer the reader to the books~\citep{Chartrand2008b, Li2012b}, or the recent survey~\citep{Li2012}.

A similar concept was introduced for vertex-colored graphs by~\citet{Krivelevich2010}. A vertex-colored graph $H$ is \textit{rainbow vertex-connected} if every pair of vertices is connected by a path whose internal vertices have distinct colors. The minimum number of colors needed to make $H$ rainbow vertex-connected is known as the \textit{rainbow vertex-connection number} and is denoted by $\rvc(H)$. \citet{Li2014} investigated the \textit{strong rainbow vertex-connection number} as a natural variant. A vertex-colored graph is \textit{strong rainbow vertex-connected} if every pair of vertices is connected by a shortest path whose internal vertices have distinct colors. The minimum number of colors needed to make $H$ strong rainbow vertex-connected is known as the \textit{strong rainbow vertex-connection number} and is denoted by $\srvc(H)$. For rainbow vertex-connection numbers or other rainbow connection numbers outside of our scope we refer the reader to~\citep{Li2012}.

Rainbow connectivity can be motivated by the following example from the domain of networking. Suppose we have a network of agents represented as a graph. Each vertex in the graph represents an agent, and an edge between two agents is a link. An agent in the network wishes to communicate with every other agent in the network by sending messages. A message sent from agent $A$ to agent $B$ is routed through other agents that act as intermediaries. This communication path uses links between agents, and each link uses a channel. For the message to get through, we require that each link on the communication path receives a distinct channel. Given a network of agents $G$, our objective is to ensure each pair of agents can establish a communication path, while also minimizing the number of channels needed. The minimum number of channels we need is exactly $\rc(G)$.

\citet{Chakraborty2009} showed that it is $\NP$\hyp{}complete to decide if $\rc(G) \leq k$ for $k=2$.~\citet{Ananth2011} proved the problem remains hard for $k \geq 3$ as well. \citet{Chandran2013} proved there is no polynomial time algorithm to rainbow color graphs with less than twice the optimum number of colors, unless $\P = \NP$. Computing the strong rainbow connection number is known to be hard as well. \citet{Chartrand2008} proved $\rc(G) = 2$ if and only if $\src(G) = 2$, so deciding if $\src(G) \leq k$ is $\NP$\hyp{}complete for $k=2$. \citet{Ananth2011} showed the problem remains $\NP$\hyp{}complete for $k \geq 3$ even when $G$ is bipartite~\citep{Ananth2011}. In the same paper, they also showed there is no polynomial time algorithm for approximating the strong rainbow connection number of an $n$-vertex graph within a factor of $n^{1/2-\epsilon}$, where $\epsilon > 0$ unless $\NP = ZPP$.

Given that it is hard to compute both the rainbow and the strong rainbow connection number, it is natural to ask if it is easier to verify if a given edge-colored graph is rainbow or strong rainbow connected. In this paper, we are concerned with the complexity of the following two decision problems:
\begin{framed}
\noindent \textsc{Rainbow connectivity} \\
\textbf{Instance:} An undirected graph $G=(V,E)$, and an edge-coloring $\chi : E \to C$, where $C$ is a set of colors \\ 
\textbf{Question:} Is $G$ rainbow connected under $\chi$?
\end{framed}
\begin{framed}
\noindent \textsc{Strong rainbow connectivity} \\
\textbf{Instance:} An undirected graph $G=(V,E)$, and an edge-coloring $\chi : E \to C$, where $C$ is a set of colors \\ 
\textbf{Question:} Is $G$ strong rainbow connected under $\chi$?
\end{framed}

\noindent Out of these two problems, \textsc{Rainbow connectivity} has gained considerably more attention in the literature. \citet{Chakraborty2009} observed the problem is easy when the number of colors $|C|$ is bounded from above by a constant. However, they proved that for an arbitrary coloring, the problem is $\NP$\hyp{}complete. Building on their result,~\citet{Li2011} proved \textsc{Rainbow connectivity} remains $\NP$\hyp{}complete for bipartite graphs. Furthermore, the problem is $\NP$\hyp{}complete even for bipartite planar graphs as shown by~\citet{Huang2011}. Recently,~\citet{Uchizawa2013} complemented these results by showing \textsc{Rainbow connectivity} is $\NP$\hyp{}complete for outerplanar graphs, and even for series-parallel graphs. In the same paper, the authors also gave some positive results. Namely, they showed the problem is in $\P$ for cactus graphs, which form a subclass of outerplanar graphs. Furthermore, they settled the precise complexity of the problem from a viewpoint of graph diameter by showing the problem is in $\P$ for graphs of diameter~1, but $\NP$\hyp{}complete already for graphs of diameter greater than or equal to~2. To the best of our knowledge,~\citet{Uchizawa2013} were the only ones to consider \textsc{Strong rainbow connectivity}. They showed the problem is in $\P$ for cactus graphs, but $\NP$\hyp{}complete for outerplanar graphs. We shortly mention similar hardness results are known for deciding if a given vertex-colored is rainbow vertex-connected (see e.g.\ \citep{Chen2011,Li2011,Uchizawa2013}).

A \textit{fixed-parameter algorithm} (FPT) solves a problem with an input instance of size $n$ and a parameter $k$ in $f(k) \cdot n^{O(1)}$ time for some computable function $f$ depending solely on $k$. That is, for every fixed parameter value it yields a solution in polynomial time and the degree of the polynomial is independent from $k$.~\citet{Uchizawa2013} gave FPT algorithms for both problems on general graphs when parameterized by the number of colors $k = |C|$. These algorithms run in $O(k2^kmn)$ time and $O(k2^kn)$ space, where $n$ and $m$ are the number of vertices and edges in the input graph, respectively. These algorithms imply both \textsc{Rainbow connectivity} and \textsc{Strong rainbow connectivity} are solvable in polynomial time for any $n$-vertex graph if $|C| = O(\log n)$.

In this paper, we prove both \textsc{Rainbow connectivity} and \textsc{Strong rainbow connectivity} remain $\NP$\hyp{}complete for interval outerplanar graphs. We then consider the class of block graphs, which form a subclass of chordal graphs. Interestingly, for block graphs \textsc{Rainbow connectivity} is $\NP$\hyp{}complete, while \textsc{Strong rainbow connectivity} is in $\P$. To the best of our knowledge, this is the first graph class known for which the complexity of these two problems differ. Both problems are easy on 2-regular graphs. However, we show that both problems become $\NP$\hyp{}complete on cubic graphs, and further generalize this for $k$-regular graphs, where $k > 3$. This completely settles the complexity of both problems from the viewpoint of regularity.

\section{Preliminaries}
All graphs in this paper are simple, finite, and undirected. We begin by defining the graph classes we consider in this work. For graph theoretic concepts not defined here, we refer the reader to~\citep{Diestel2005}. For an integer $n$, we write $[n] = \{1,2,\ldots,n\}$.

A \textit{chord} is an edge joining two non-consecutive vertices in a cycle. A graph is \textit{chordal} if every cycle of length 4 or more has a chord. Equivalently, a graph is chordal if it contains no induced cycle of length 4 or more. A \textit{cut vertex} is a vertex whose removal will disconnect the graph. A \textit{biconnected graph} is a connected graph having no cut vertices. A \textit{block graph} is an undirected graph where every maximal biconnected component, known as a \textit{block}, is a clique. In a block graph $G$, different blocks intersect in at most one vertex, which is a cut vertex of~$G$. In other words, every edge of $G$ lies in a unique block, and $G$ is the union of its blocks. It is easy to see that a block graph is chordal. Another well-known subclass of chordal graphs is formed by \textit{interval graphs}. To define such graphs, we will first introduce the notion of \textit{clique trees}. A clique tree of a connected chordal graph $G$ is any tree $T$ whose vertices are the maximal cliques of $G$ such that for every two maximal cliques $C_i,C_j$, each clique on the path from $C_i$ to $C_j$ in $T$ contains $C_i \cap C_j$. Chordal graphs are precisely the class of graphs that admit a clique tree representation~\citep{Gavril1974}. As shown by~\citet{Gilmore1964}, a graph is an interval graph if and only if it admits a clique tree that is a path. A graph is \textit{planar} if it can be embedded in the plane without crossing edges. A graph is \textit{outerplanar} if it has a crossing-free embedding in the plane such that all vertices are on the same face. Finally, the \textit{degree} of a vertex is the number of edges incident to it. A graph is $k$-regular if the degree of each of its vertices is exactly $k$. Specifically, a 3-regular graph is known as a \textit{cubic graph}.

The $3$\hyp{}\textsc{Occurrence} $3$-\textsc{SAT} problem is a variant of the 3\hyp{}\textsc{SAT} problem where every variable occurs at most three times. The $\NP$\hyp{}completeness of \textsc{Rainbow connectivity} and \textsc{Strong rainbow connectivity} for outerplanar graphs were shown by a reduction from the $3$\hyp{}\textsc{Occurrence} $3$-\textsc{SAT} problem by~\citet{Uchizawa2013}. Often, it does not matter if one refers by 3\hyp{}\textsc{SAT} to the variant of 3\hyp{}\textsc{SAT} where each clause has exactly 3 literals, or the variant where each clause has at most 3 literals since both are $\NP$\hyp{}complete. However, for $3$\hyp{}\textsc{Occurrence} $3$-\textsc{SAT} this distinction is crucial. The variant where every clause has exactly 3 literals is in $\P$ because every such instance is satisfiable as shown by~\citet{Tovey1984}. The variant where every clause has at most 3 literals is however $\NP$\hyp{}complete~\citep{Papadimitriou1994}.

This distinction is not explicitly made by~\citet{Uchizawa2013}. However, it is not hard to modify their clause gadgets to allow for less than 3 literals. In other words, this does not affect the correctness of their reductions. The clause gadgets corresponding to clauses of size one and two can be found from the Appendix of this article. Their reductions greatly inspire ours, and thus we also reduce from the $3$\hyp{}\textsc{Occurrence} $3$-\textsc{SAT} problem, where each clause has at most 3 literals. We begin by describing their construction as our reductions are based on it. We tighten their result slightly by observing the resulting graph is both bipartite and outerplanar.

\begin{theorem}[{\citet{Uchizawa2013}}]
\label{thm_rc_is_npc_for_planar_graphs}
\textsc{Rainbow connectivity} is $\NP$\hyp{}complete when restricted to the class of bipartite outerplanar graphs.
\end{theorem}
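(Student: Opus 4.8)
The plan is to show membership in $\NP$ and then $\NP$\hyp{}hardness by a reduction from $3$\hyp{}\textsc{Occurrence} $3$-\textsc{SAT} in which every clause has at most three literals. Membership is immediate: a nondeterministic machine guesses, for each of the $\binom{|V|}{2}$ vertex pairs, a path of length at most $|E|$, and checks in polynomial time that it uses no color twice. So essentially all of the work is in the hardness reduction, which I would take from~\citet{Uchizawa2013} and then analyze more carefully.

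Given a formula $\phi$ with variables $x_1,\dots,x_n$ and clauses $c_1,\dots,c_m$, I would build an edge-colored graph $G_\phi$ with two distinguished vertices $s$ and $t$ and the property that $G_\phi$ is rainbow connected if and only if $\phi$ is satisfiable. The construction has a variable gadget $X_i$ for each $x_i$ and a clause gadget $C_j$ for each $c_j$; the gadgets are strung along an $s$–$t$ backbone so that every $s$–$t$ path must traverse, in order, each clause gadget, and inside $C_j$ must pass through a sub-path associated with one of the (at most three) literals occurring in $c_j$. Each variable gadget has two "modes", corresponding to setting $x_i$ to true or false; these modes are encoded by a shared pair of colors so that committing to the literal $x_i$ (respectively $\overline{x_i}$) inside some clause gadget consumes a color that can then no longer be reused for the opposite literal elsewhere. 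The remaining colors are assigned so that every vertex pair other than $(s,t)$ is trivially rainbow connected by a short path whose colors are private to it; since $\phi$ has the $3$\hyp{}occurrence property, each variable appears in at most three clauses, which bounds how many times a mode color is demanded and keeps the construction consistent. The reduction is clearly polynomial, since each gadget has constant size and there are $O(n+m)$ of them.

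For correctness, if $\phi$ has a satisfying assignment, then choosing in each $X_i$ the mode matching the assignment and routing the $s$–$t$ path through a true literal of each clause yields a rainbow path, while all other pairs are handled by the private colors, so $G_\phi$ is rainbow connected. Conversely, any rainbow $s$–$t$ path induces a choice of mode in each variable gadget, and the coloring is arranged so that an inconsistent choice (using both $x_i$ and $\overline{x_i}$) repeats a color; the surviving consistent choice must satisfy every clause, since the path passes through each $C_j$ via a true literal.

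The part that requires the extra care, and is really the point of restating the theorem here, is checking that $G_\phi$ is both bipartite and outerplanar. For outerplanarity, each gadget is a fan or path attached along the $s$–$t$ backbone and drawn on the outer face, so $G_\phi$ contains no $K_4$ or $K_{2,3}$ minor; the $3$\hyp{}occurrence restriction is exactly what lets the literal-to-clause incidences be laid out along a line without forcing a crossing. For bipartiteness, one checks that every cycle introduced by a gadget (and by the incidence structure) has even length, so $G_\phi$ admits a proper $2$-coloring of its vertices. The main obstacle is the simultaneous juggling act in the hardness step: the color classes must be chosen so that the \emph{only} way to block a rainbow $s$–$t$ path is an unsatisfied clause—ruling out shortcut paths and unintended color reuse—while the outerplanarity constraint sharply restricts how the variable and clause gadgets may be wired together, and the parity of every gadget cycle must be controlled to keep the graph bipartite. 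Once the gadgets of~\citet{Uchizawa2013} (including the size-one and size-two clause gadgets from the appendix) are verified to meet all three demands, the theorem follows.
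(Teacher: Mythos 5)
Your proposal follows essentially the same route as the paper: membership in $\NP$ by guessing one path per vertex pair, then a reduction from $3$-\textsc{Occurrence} $3$-\textsc{SAT} (clauses of size at most three) using cycle-based variable gadgets whose two traversal modes consume one of two triples of colors, clause gadgets whose literal chords carry the opposite mode's colors, an $s$--$t$ backbone forcing any rainbow path through every gadget, and the final check that the resulting graph is bipartite and outerplanar (with the size-one and size-two clause gadgets handled separately). The one imprecision is your suggestion that committing to a literal in one clause gadget directly blocks the opposite literal in another; in the actual coloring the consistency is mediated by the variable gadget (a positive-literal chord is colored with a negative-path color and vice versa, so the chosen $X_i$ path determines which chords remain usable), but since you explicitly defer to the gadgets of Uchizawa et al.\ this does not affect correctness.
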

\textbf{Construction}: We first observe the problem is in $\NP$ with the certificate being a set of colored paths, one for each pair of vertices. It is then simple to decide if a given path is rainbow. Given a $3$\hyp{}\textsc{Occurrence} $3$\hyp{}\textsc{SAT} formula $\phi = \bigwedge_{j=1}^{m} c_i$ over variables $x_1,x_2,\ldots,x_n$, we construct a graph $G_\phi$ and an edge-coloring $\chi$ such that $\phi$ is satisfiable if and only if $G_\phi$ is rainbow connected under $\chi$. We first describe the construction of $G_\phi$, and then the edge-coloring $\chi$ of $G_\phi$.

\begin{figure}
\bsubfloat[]{%
  \includegraphics[scale=1]{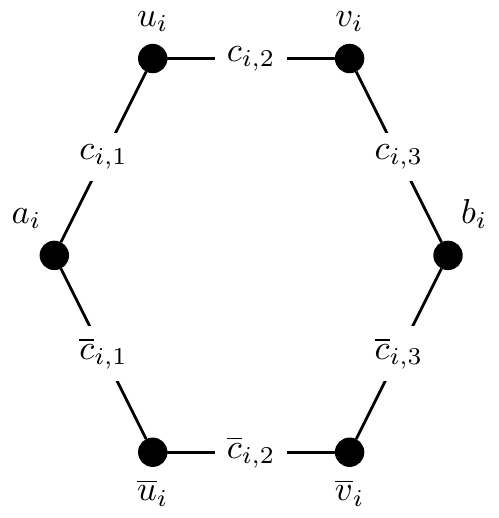}%
}
\bsubfloat[]{%
  \includegraphics[scale=1]{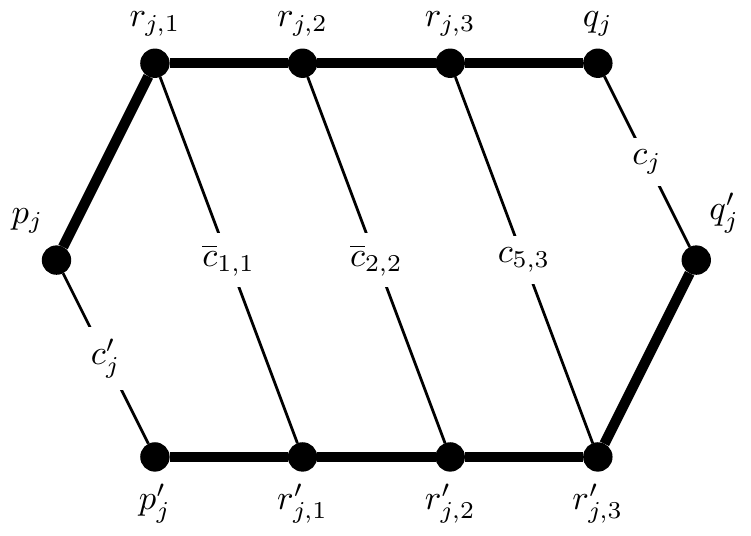}%
}
\bsubfloati\qquad\bsubfloatii
\caption{\textbf{(a)} A variable gadget $X_i$ for the variable $x_i$, and \textbf{(b)} a clause gadget $C_j$ for the clause $c_j = (x_1 \vee x_2 \vee \neg x_5)$, where $x_1$ is the first literal of $x_1$, $x_2$ is the second literal of $x_2$, and $\neg x_5$ is the third literal of $x_5$.}
\label{fig_chordal_var_clause_gadget}
\end{figure}

For each variable $x_i$, $i \in [n]$, we build a \textit{variable gadget} $X_i$. A variable gadget $X_i$ is a cycle graph $C_6$ embedded in the plane on vertices $a_i$, $u_i$, $v_i$, $b_i$, $\overline{v}_i$, $\overline{u}_i$ in clockwise order. For each clause $c_j$, $j \in [m]$, we build a \textit{clause gadget} $C_j$. A clause gadget $C_j$ is built by starting from a cycle graph $C_{10}$ embedded in the plane on vertices $p_j$, $r_{j,1}$, $r_{j,2}$, $r_{j,3}$, $q_j$, $q'_j$, $r'_{j,3}$, $r'_{j,2}$, $r'_{j,1}$, $p'_j$ in clockwise order, and by adding chords $(r_{j,1},r'_{j,1})$, $(r_{j,2},r'_{j,2})$, and $(r_{j,3},r'_{j,3})$. These chords correspond to the three literals the clause $c_j$ has. Both a variable gadget and a clause gadget are shown in Figure~\ref{fig_chordal_var_clause_gadget}.

We connect $X_i$ with $X_{i+1}$ by adding an edge $(b_i,a_{i+1})$ for each $1 \leq i < n$. Then, we connect $C_j$ with $C_{j+1}$ by adding an edge $(q'_j,p_{j+1})$ for each $1 \leq j < m$. Likewise, we connect the two components together by adding the edge $(b_n,p_1)$. We then add one vertex $t$, and the edge $(q'_m,t)$. Finally, we build a path of length $m$ on vertices $s_1,s_2,\ldots,s_m$, and connect it to $G_\phi$ by adding the edge $(s_m,a_1)$. This completes the construction of $G_\phi$. We can verify $G_\phi$ is indeed a bipartite outerplanar graph.

We now describe the edge-coloring $\chi$ given to the edges of $G_\phi$. Notice there are exactly two paths between $a_i$ and $b_i$ in a variable gadget $X_i$. Intuitively, taking the path from $a_i$ to $b_i$ through $u_i$ and $v_i$ corresponds to setting $x_i = 1$ in the formula $\phi$. We refer to this path as the \textit{positive $X_i$ path}. We color the three edges $(a_i,u_i), (u_i,v_i)$, and $(v_i,b_i)$ with colors $c_{i,1}, c_{i,2}$, and $c_{i,3}$, respectively. Taking the path from $a_i$ to $b_i$ through $\overline{u}_i$ and $\overline{v}_i$ corresponds to setting $x_i = 0$ in the formula $\phi$. We refer to this path as the \textit{negative $X_i$ path}. The three edges $(a_i,\overline{u}_i),(\overline{u}_i,\overline{v}_i)$ and $(\overline{v}_i,b_i)$ receive the colors $\overline{c}_{i,1},\overline{c}_{i,2}$ and $\overline{c}_{i,3}$, respectively. The coloring of a variable gadget $X_i$ is illustrated in Figure~\ref{fig_chordal_var_clause_gadget} (a).

Recall a variable $x_i$ appears at most three times in $\phi$. We refer to the first occurrence of $x_i$ as the \textit{first literal of $x_i$}, the second occurrence of $x_i$ as the \textit{second literal of $x_i$}, and finally the third occurrence of $x_i$ as the \textit{third literal of $x_i$}. If a clause has two or three literals of a same variable, the tie is broken arbitrarily. In a clause gadget $C_j$, we color the edge $(p_j,p'_j)$ with the color $c_j'$, and the edge $(q_j,q'_j)$ with the color $c_j$. For each $k \in [3]$, we denote the $k$th literal in the $j$th clause by $l_{j,k}$. We color the edge $(r_{j,k},r'_{j,k})$ as follows:
\[
 \chi((r_{j,k},r'_{j,k})) = 
 \begin{dcases*}
        \overline{c}_{i,1} 	& if $l_{j,k}$ is a positive literal and the first literal of $x_i$ \\
        \overline{c}_{i,2} 	& if $l_{j,k}$ is a positive literal and the second literal of $x_i$ \\
        \overline{c}_{i,3} 	& if $l_{j,k}$ is a positive literal and the third literal of $x_i$ \\
        c_{i,1} 		& if $l_{j,k}$ is a negative literal and the first literal of $x_i$ \\
        c_{i,2} 		& if $l_{j,k}$ is a negative literal and the second literal of $x_i$ \\
        c_{i,3} 		& if $l_{j,k}$ is a negative literal and the third literal of $x_i$
  \end{dcases*}
\]
The edge $(q'_j,p_{j+1})$, for each $1 \leq j < m$, receives the color $c_j'$, while the edge $(q'_m,t)$ is colored with $c'_m$. The coloring of a clause gadget $C_j$ is shown in Figure~\ref{fig_chordal_var_clause_gadget} (b).

Finally, we color each edge $(s_j,s_{j+1})$ with the color $c_j$ for each $1 \leq j < m$. The edge $(s_m,a_1)$ is colored with the color $c_m$. Every other uncolored edge of $G_\phi$ receives a fresh new color, that does not appear in $G_\phi$. Formally, these are precisely the edges in $U \cup W$, where $U = \{ (b_i, a_{i+1}) \mid 1 \leq i < n \} \cup \{ (b_n, p_1) \}$ and
\begin{equation*}
\begin{split}
W &= \{ (p_j,r_{j,1}), (r_{j,1},r_{j,2}), (r_{j,2},r_{j,3}), (r_{j,3},q_j) \mid 1 \leq j \leq m \} \\
\quad &\cup \{(q'_j,r'_{j,3}), (r'_{j,3},r'_{j,2}), (r'_{j,2},r'_{j,1}), (r'_{j,1},p'_j) \mid 1 \leq j \leq m \}.
\end{split}
\end{equation*}
The edges in $W$ correspond precisely to the edges drawn with thick lines in Figure~\ref{fig_chordal_var_clause_gadget} (b), for each clause gadget $C_j$. This completes the edge-coloring $\chi$ of $G_\phi$. The following claim is true for $G_\phi$, and it furthermore proves Theorem~\ref{thm_rc_is_npc_for_planar_graphs}.
\begin{lemma}[{\citet{Uchizawa2013}}]
\label{lemma_g_rainbow_conn}
The graph $G_\phi$ is rainbow connected under $\chi$ if and only if $G_\phi$ has a rainbow path between the vertices $s_1$ and $t$. Furthermore, there is a rainbow path between $s_1$ and $t$ if and only if the formula $\phi$ is satisfiable.
\end{lemma}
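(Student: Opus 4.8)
The plan is to establish the cycle of implications
\[
[\,G_\phi\text{ is rainbow connected}\,]\ \Rightarrow\ [\,\exists\text{ a rainbow }s_1\text{--}t\text{ path}\,]\ \Rightarrow\ [\,\phi\text{ is satisfiable}\,]\ \Rightarrow\ [\,G_\phi\text{ is rainbow connected}\,],
\]
which yields at once both equivalences claimed in the lemma. The first implication is immediate, since rainbow connectivity demands a rainbow path between \emph{every} pair of vertices, in particular between $s_1$ and $t$.

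For the second implication, I would first determine the shape that any $s_1$--$t$ path is forced to take. Since $s_1$ and $t$ have degree one and $s_2,\dots,s_m$ have degree two, such a path must begin $s_1,\dots,s_m,a_1$ and end $q'_m,t$; and inside the gadgets the degree-two vertices force it, after entering $X_i$ at $a_i$, to leave at $b_i$ along one of the two $a_i$--$b_i$ paths, and, after entering $C_j$ at $p_j$, to leave at $q'_j$ (every other continuation dead-ends). Hence the path reads $s_1,\dots,s_m,a_1$, passes through $X_1,\dots,X_n$ in order (each time via the positive or negative $X_i$ path), then $b_n,p_1$, then through $C_1,\dots,C_m$ in order, and finally $q'_m,t$. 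The initial stretch along the $s$-path uses $c_1,\dots,c_m$, and the connector edges between clause gadgets together with $(q'_m,t)$ use $c'_1,\dots,c'_m$; consequently, inside each $C_j$ the path may use neither $(q_j,q'_j)$ (color $c_j$) nor $(p_j,p'_j)$ (color $c'_j$), so it enters $C_j$ through $(p_j,r_{j,1})$, leaves through $(r'_{j,3},q'_j)$, and therefore uses at least one chord $(r_{j,k},r'_{j,k})$. Declaring $x_i=1$ when the positive $X_i$ path is taken and $x_i=0$ otherwise defines an assignment $\tau$. The coloring rule for the chords shows that the chord of the $k$-th literal $l_{j,k}$ of $c_j$ carries a color of the form $c_{i,\ell}$ or $\overline{c}_{i,\ell}$, with $\ell$ recording which occurrence of $x_i$ the literal is, and this color is unused by the $X_i$-stretch of the path precisely when $\tau$ satisfies $l_{j,k}$; the bookkeeping by occurrence index, together with the hypothesis that each variable occurs at most three times, guarantees these chord colors never collide between different clauses. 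Since some chord of $C_j$ must be used for every $j$, each clause has a literal satisfied by $\tau$, so $\phi$ is satisfiable.

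For the third implication, given a satisfying assignment $\tau$ I would single out a backbone path $P$ from $s_1$ to $t$: the $s$-path, then the positive or negative $X_i$ path according to $\tau(x_i)$, and inside each $C_j$ a $p_j$--$q'_j$ path through the single chord of one literal satisfied by $\tau$ (which exists). Running the previous paragraph in reverse shows $P$ is rainbow. It then remains to exhibit a rainbow path between an arbitrary pair $u,v$: if both lie on $P$, the sub-path of $P$ between them works; otherwise each of $u,v$ that is off $P$ lies in a single gadget and is joined to $P$ by a short detour using only the gadget edges not on $P$ (the complementary $X_i$ path, or the few remaining edges of $C_j$). One then checks that splicing these detours into the appropriate sub-path of $P$ creates no repeated color — using that the only non-fresh colors on a detour are local to one gadget — and, when a detour color coincides with a chord color used on $P$, one exploits the available freedom (choice of detour side, choice of which satisfied literal's chord to route through in each clause, or going the long way around a cycle) to avoid the offending clause gadget.

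The combinatorial heart is the second implication: the forced structure of $s_1$--$t$ paths together with the correspondence between chord-color availability and literal satisfaction. The main obstacle is the third implication — promoting a satisfying assignment to rainbow paths for \emph{all} pairs of vertices, especially far-apart pairs with one or both endpoints off the backbone, which is a finite but somewhat delicate case analysis over gadget types and the relative positions of $u$ and $v$.
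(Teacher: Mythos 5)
The paper does not actually prove this lemma: it is attributed to Uchizawa et al.\ and only the construction is reproduced here, so there is no in-paper argument to compare yours against. Judged on its own terms, your cycle of implications is the intended one, and the middle implication --- the combinatorial heart --- is essentially complete and correct: the degree/connectivity argument forcing the shape of any $s_1$--$t$ path, the exclusion of $(q_j,q'_j)$ and $(p_j,p'_j)$ by the colors $c_j$ and $c'_j$ already committed on the $s$-path and on the connector edges, the resulting need to cross one chord per clause gadget, and the fact that a chord's color clashes with the chosen $X_i$-path exactly when the corresponding literal is falsified are all right, as is the remark that occurrence indices prevent chord colors from colliding across clauses. (One small slip: inside $C_j$ it is not that all internal vertices have degree two --- the $r_{j,k}$ do not --- but that $p_j$ and $q'_j$ are the only vertices with edges leaving the gadget, which is what forces the traversal.)

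The soft spot is the third implication, which you leave as a sketch and whose difficulty you partly misdiagnose. You propose to handle a detour color colliding with a backbone chord color by re-choosing which satisfied literal's chord the backbone routes through, or by ``avoiding the offending clause gadget''; but a clause may have a unique satisfied literal, and a clause gadget lying between $u$ and $v$ cannot be avoided, so that freedom can vanish. The observation that actually makes this direction routine is that the rigidity of the clause gadgets is an artifact of the specific pair $(s_1,t)$: a path between any other pair never needs both the full $s$-path and the full chain of connector edges simultaneously, so inside a clause gadget it may legitimately use $(q_j,q'_j)$ (color $c_j$) or $(p_j,p'_j)$ (color $c'_j$) as a bypass requiring no chord at all, and it can traverse intermediate variable gadgets with either polarity (or not at all). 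You gesture at this with ``going the long way around a cycle,'' but you neither isolate it as the decisive mechanism nor carry out the case analysis, so as written the implication from satisfiability to rainbow connectivity of all of $G_\phi$ is not yet established.
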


In the previous reduction, by observing every pair of vertices is rainbow connected by a rainbow shortest path given an satisfiable instance of $\phi$,~\citet{Uchizawa2013} also got the following.
\begin{theorem}[{\citet{Uchizawa2013}}]
\textsc{Strong rainbow connectivity} is $\NP$\hyp{}complete when restricted to the class of bipartite outerplanar graphs.
\end{theorem}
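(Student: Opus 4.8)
The plan is to reuse, without any change, the graph $G_\phi$ and the coloring $\chi$ built for Theorem~\ref{thm_rc_is_npc_for_planar_graphs}; recall that $G_\phi$ is both bipartite and outerplanar. Membership in $\NP$ is clear: a certificate is a family of paths, one per pair of vertices, and one checks in polynomial time that each of them is a shortest path and uses no color twice. Since every strong rainbow connected graph is rainbow connected, Lemma~\ref{lemma_g_rainbow_conn} already supplies one direction of correctness: if $\phi$ is unsatisfiable, then $G_\phi$ is not rainbow connected, and a fortiori not strong rainbow connected. It therefore remains only to show that if $\phi$ is satisfiable then $G_\phi$ is strong rainbow connected under $\chi$.

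For this direction I would fix a satisfying assignment of $\phi$ and re-examine the rainbow paths produced in the proof of Lemma~\ref{lemma_g_rainbow_conn}, checking that each of them can be taken to be a shortest path. The structural observation that makes this routine is that $G_\phi$ is tree-like at the level of gadgets: the path $s_1,\dots,s_m$ and the vertex $t$ are pendant, the variable gadgets form a chain of copies of $C_6$ glued along connector edges, and the clause gadgets form a chain of copies of $C_{10}$ with three chords each, glued along the edges $(q'_j,p_{j+1})$. Consequently, between two vertices lying in different gadgets the coarse route through the backbone is forced, and the only freedom is which arc of a gadget to follow. In a variable gadget the positive and negative $a_i$--$b_i$ paths both have length $3$, so either is a geodesic; in a clause gadget every $p_j$--$q'_j$ path obtained by crossing a single chord $(r_{j,k},r'_{j,k})$ has length $5$, which is exactly the distance between $p_j$ and $q'_j$ in $G_\phi$. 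Hence the rainbow $s_1$--$t$ path obtained from the satisfying assignment has the same length as every other $s_1$--$t$ path, so it is a shortest path, and the analogous inspection handles every remaining pair of vertices (for two vertices inside a single gadget there is nothing to prove, as the arcs just described are already geodesics).

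The only real work is this case distinction, and it is bookkeeping rather than an obstacle: one simply has to confirm that none of the path choices forced by $\chi$ in the proof of Lemma~\ref{lemma_g_rainbow_conn} ever deviates from a shortest path between its endpoints. Once this is done, a satisfying assignment of $\phi$ yields a rainbow shortest path between every pair of vertices of $G_\phi$, so $G_\phi$ is strong rainbow connected under $\chi$; combined with the previous paragraph, this shows $\phi$ is satisfiable if and only if $G_\phi$ is strong rainbow connected under $\chi$, and the theorem follows.
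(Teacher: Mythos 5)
Your proposal is correct and takes essentially the same route as the paper: the paper likewise reuses $G_\phi$ and $\chi$ unchanged, inherits the unsatisfiable direction from Lemma~\ref{lemma_g_rainbow_conn}, and for satisfiable $\phi$ simply observes that every pair of vertices is joined by a rainbow \emph{shortest} path, which is exactly the gadget-by-gadget distance check you carry out. One minor imprecision: not every $s_1$--$t$ path has the same length (detours such as $p_j, p'_j, r'_{j,1}, r_{j,1}, \ldots$ inside a clause gadget are longer), but your essential point---that each arc chosen by the constructed rainbow path is a geodesic between its gadget's entry and exit vertices, so the concatenation attains $d(s_1,t)$---is what is actually needed and is correct.
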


\section{Hardness results}
In this section, we give new hardness results for both \textsc{Rainbow connectivity} and \textsc{Strong rainbow connectivity}. All of our hardness results will follow by a reduction from the $3$\hyp{}\textsc{Occurrence} $3$\hyp{}\textsc{SAT} problem, and will essentially be based on Theorem~\ref{thm_rc_is_npc_for_planar_graphs}. For the sake of brevity, and similarly to Theorem~\ref{thm_rc_is_npc_for_planar_graphs}, we will present our constructions assuming each clause is of size three. The clause gadgets corresponding to clauses of size one and two can be found in the Appendix for each graph class.

We summarize the known complexity results for both problems in Table~\ref{table_hardness_summary} along with our new results.
\begin{table}
\caption{Summary of known complexity results for \textsc{Rainbow connectivity} and \textsc{Strong rainbow connectivity}. The symbol $\dagger$ stands for~\citep{Uchizawa2013}, and the symbol $\star$ for~\citep{Huang2011}.}
\label{table_hardness_summary}
\centering
\begin{tabular}{lll}
\toprule 
Graph class & \textsc{Rainbow connectivity} & \textsc{Strong rainbow connectivity} \\ 
\midrule
Bounded diameter $\geq 2$ 	& $\NP$-complete $\dagger$ & $\P$ [Theorem~\ref{thm_src_bounded_diam}] \\ 
Series-parallel 			& $\NP$-complete $\dagger$ & $\NP$-complete $\dagger$ \\
Bipartite planar 			& $\NP$-complete $\star$ & $\NP$-complete $\dagger$ \\
Bipartite outerplanar 		& $\NP$-complete $\dagger$ & $\NP$-complete $\dagger$ \\
Interval outerplanar		& $\NP$-complete [Theorem~\ref{thm_rc_is_hard_for_chordal}] & $\NP$-complete [Corollary~\ref{thm_src_is_npc_for_chordal}] \\
Cactus 						& $\P$ $\dagger$ & $\P$ $\dagger$ \\
$k$-regular, $k \geq 3$		& $\NP$-complete [Theorem~\ref{thm_rc_npc_kregular}] & $\NP$-complete [Corollary~\ref{thm_src_npc_kregular}] \\
Block 						& $\NP$-complete & $\P$ [Corollary~\ref{corollary_src_easy_for_block_graphs}] \\
Interval block				& $\NP$-complete [Theorem~\ref{thm_rc_is_npc_for_block}] & $\P$ \\
Tree 						& $\P$ & $\P$ \\
\bottomrule 
\end{tabular}
\end{table}

\subsection{Rainbow and strong rainbow connectivity are $\NP$\hyp{}complete for interval outerplanar graphs}
In this subsection, we prove \textsc{Rainbow connectivity} and \textsc{Strong rainbow connectivity} remain $\NP$\hyp{}complete for interval outerplanar graphs.

\begin{theorem}
\label{thm_rc_is_hard_for_chordal}
\textsc{Rainbow connectivity} is $\NP$\hyp{}complete when restricted to the class of interval outerplanar graphs.
\end{theorem}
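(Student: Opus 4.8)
The plan is to modify the construction of $G_\phi$ from Theorem~\ref{thm_rc_is_npc_for_planar_graphs} so that the resulting graph is an interval graph as well as outerplanar, while preserving the property captured in Lemma~\ref{lemma_g_rainbow_conn}. Membership in $\NP$ is inherited verbatim, so the whole task is to produce the reduction. Recall that, by the characterization of~\citet{Gilmore1964}, a graph is an interval graph if and only if it admits a clique tree that is a path; equivalently, I want to lay out the maximal cliques of the new graph along a line so that each vertex occupies a consecutive interval of cliques. The obstruction in the Uchizawa et al.\ graph is the variable gadget $X_i$ and the clause gadget $C_j$: both contain induced cycles of length $\geq 4$ (a $C_6$, and a $C_{10}$ with three chords), so neither is chordal, let alone an interval graph.

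The first step is to redesign the variable and clause gadgets as interval graphs that still have exactly the two internally disjoint ``positive'' and ``negative'' routes between the two ports. I would replace the $6$-cycle $X_i$ by a ``theta-like'' interval graph: take the two ports $a_i,b_i$, put the positive path $a_i,u_i,v_i,b_i$ and the negative path $a_i,\overline u_i,\overline v_i,b_i$, and then add just enough chords inside each of the two $4$-cycles $a_i u_i v_i b_i$ and $a_i \overline u_i \overline v_i b_i$ to triangulate them into interval graphs, being careful that the added chords get fresh unique colors so they never help form a rainbow $s_1$--$t$ path (a triangulated $C_4$ is an interval graph, and two such sharing only the edge $\{a_i,b_i\}$ is still an interval graph whose clique path is short). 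The clause gadget is handled the same way: start from the $C_{10}$ with its three literal-chords, triangulate each resulting face with fresh-colored edges, and verify the result is an interval graph in which the only relevant short routes from $p_j$ to $q_j'$ are the ones through the three literal chords (as in the original, the thick $W$-edges and the new triangulating edges are uniquely colored). I also need to re-examine the colorings of Figure~\ref{fig_chordal_var_clause_gadget}: the point is that adding uniquely-colored chords never creates a new rainbow path that bypasses the intended literal/variable correspondence, so Lemma~\ref{lemma_g_rainbow_conn} goes through with the same proof.

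The second step is to check that the global assembly is still an interval graph. The original graph glues the $X_i$'s in a path, glues the $C_j$'s in a path, joins the two chains by $(b_n,p_1)$, attaches the pendant path $s_1,\dots,s_m,a_1$ and the pendant vertex $t$ via $(q_m',t)$. A path of cliques, concatenated end to end with another path of cliques, joined by cut edges, and decorated with pendant simplicial vertices, is again an interval graph: I would exhibit the clique path explicitly by concatenating the clique paths of the individual gadgets in the order $s_1,\dots,s_m$, $X_1,\dots,X_n$, $C_1,\dots,C_m$, $t$, inserting the single cut edges $(b_i,a_{i+1})$, $(b_n,p_1)$, $(q_j',p_{j+1})$, $(q_m',t)$, $(s_m,a_1)$ as their own two-vertex cliques at the junctions. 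Outerplanarity is also preserved, since each new gadget is a triangulated polygon drawn inside the old face and the gluing is along cut edges; hence the new graph is interval \emph{and} outerplanar.

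The main obstacle I anticipate is the verification that triangulating the gadgets does not break the ``only the intended paths are rainbow'' invariant: with chords present, an adversary gets more short $p_j$--$q_j'$ and $a_i$--$b_i$ routes, and I must confirm each such route either reuses a color or uses one of the fresh unique chord colors in a way that still blocks it. I would dispatch this by a short case analysis on each gadget, using that every chord is uniquely colored so it can be ``used for free'' only once, and that each alternative route through a gadget is forced to repeat one of the structural colors $c_{i,k}/\overline c_{i,k}$ or $c_j/c_j'$ exactly as in the original argument. Once that invariant is re-established, Lemma~\ref{lemma_g_rainbow_conn} and hence $\NP$-hardness follow immediately, completing the proof. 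The strong-rainbow version (Corollary~\ref{thm_src_is_npc_for_chordal}) then follows by the same observation Uchizawa et al.\ used: in the ``yes'' case every pair is joined by a rainbow \emph{shortest} path, since the triangulating chords are chosen so as not to shorten the relevant distances.
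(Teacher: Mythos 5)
Your overall plan --- triangulate the variable and clause gadgets of $G_\phi$ into interval outerplanar pieces, exhibit a global clique path, and argue that Lemma~\ref{lemma_g_rainbow_conn} survives --- is exactly the paper's strategy, and the assembly step is fine. The gap is in how you neutralize the new chords. You give the triangulating chords \emph{fresh unique colors} and claim this prevents them from helping to form a rainbow $s_1$--$t$ path. It does the opposite: an edge whose color appears nowhere else can \emph{always} be inserted into a rainbow path, so every fresh-colored chord is a free shortcut. Concretely, your two $4$-cycles $a_iu_iv_ib_i$ and $a_i\overline{u}_i\overline{v}_ib_i$ presuppose the chord $(a_i,b_i)$; with a fresh color on it, the $s_1$--$t$ path crosses $X_i$ in one step without consuming any of $c_{i,1},c_{i,2},c_{i,3}$ or $\overline{c}_{i,1},\overline{c}_{i,2},\overline{c}_{i,3}$, so it is no longer forced to commit to a truth value of $x_i$ and may later use literal chords corresponding to an inconsistent assignment. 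Even a triangulation avoiding $(a_i,b_i)$, say a fresh-colored $(a_i,v_i)$, lets the path traverse $X_i$ using only $c_{i,3}$, which again breaks the correspondence. The same failure occurs in the clause gadget: a fresh-colored chord such as $(p_j,r'_{j,1})$ yields a $p_j$--$q'_j$ route along the primed side that avoids every literal chord, so the gadget no longer forces a satisfied literal. Your fallback (``each alternative route is forced to repeat a structural color'') is precisely the claim that fails.

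The paper's fix is the opposite coloring discipline. The chords added to $X^M_i$, namely $(u_i,\overline{u}_i)$, $(u_i,\overline{v}_i)$, $(v_i,\overline{v}_i)$, all receive the color $\overline{c}_i$, and the bridge in $U$ leaving $X^M_i$ is recolored $\overline{c}_i$ as well; since every $s_1$--$t$ path must use that bridge, none of the new chords can lie on it. Likewise the four chords added to $C^M_j$ receive $c'_j$, the color of the mandatory bridge $(q'_j,p_{j+1})$ (or $(q'_m,t)$). With that change --- reuse the color of a forced bridge on every new chord instead of a fresh color --- and with the paper's specific fan triangulations, which also leave the $a_i$--$b_i$ and $p_j$--$q'_j$ distances unchanged as needed for Corollary~\ref{thm_src_is_npc_for_chordal}, your argument goes through.
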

\begin{proof}
We assume the same terminology as in Theorem~\ref{thm_rc_is_npc_for_planar_graphs}. Given a $3$\hyp{}\textsc{Occurrence} $3$\hyp{}\textsc{SAT} instance $\phi$, we first build a graph $G_\phi$ along with its edge-coloring $\chi$ precisely as in Theorem~\ref{thm_rc_is_npc_for_planar_graphs}. For clarity, we then rename $G_\phi$ to $G^M_\phi$, and $\chi$ to $\chi_M$. 

A variable gadget $X^M_i$ is obtained from $X_i$ by adding three chords $(u_i,\overline{u}_i)$, $(u_i,\overline{v}_i)$, and $(v_i,\overline{v}_i)$, and coloring each with a new color $\overline{c}_i$. Next, a clause gadget $C^M_j$ is obtained from $C_j$ by adding four chords $(r_{j,1},p'_j)$, $(r_{j,2},r'_{j,1})$, $(r_{j,3},r'_{j,2})$, and $(q_j,r'_{j,3})$. Each of these four chords receive the color $c'_j$. Finally, we recolor each edge in $U = \{ (b_i, a_{i+1}) \mid 1 \leq i < n \} \cup \{ (b_n,p_1) \}$ with the color $\overline{c}_i$. We can now verify that $G^M_\phi$ is indeed a chordal outerplanar graph. Furthermore, it is easy to see $G^M_\phi$ admits a clique tree that is a path. Thus, $G^M_\phi$ is both interval and outerplanar.

We then show these modifications do not contradict Lemma~\ref{lemma_g_rainbow_conn}. First, observe the distance between $a_i$ and $b_i$ for each $1 \leq i \leq n$ remains unchanged. However, we introduce additional paths between $a_i$ and $b_i$. But because every edge in $U$ is a bridge and has the color $\overline{c}_i$, it still holds that any rainbow path from $s_1$ to $t$ must, in every $X^M_i$, take precisely either the positive $X^M_i$ path or the negative $X^M_i$ path. 

Similarly, we also establish additional paths between $p_j$ and $q'_j$. However, because each edge in $\{ (q'_j,p_{j+1}) \mid 1 \leq j < m \} \cup \{ (q'_m,t)\}$ is a bridge and has the color $c'_j$, none of the newly added chords can be on a rainbow path from $s_1$ to $t$. Finally, observe also the distance between $p_j$ and $q'_j$ for each $1 \leq j \leq m$ remains unchanged. This implies any rainbow path from $s_1$ to $t$ must still, in every $C^M_j$, use precisely one of the edges $(r_{j,1},r'_{j,1})$, $(r_{j,2},r'_{j,2})$, or $(r_{j,3},r'_{j,3})$. Thus, Lemma~\ref{lemma_g_rainbow_conn} still holds, and we have the theorem.
\end{proof}
Similarly to Theorem~\ref{thm_rc_is_npc_for_planar_graphs}, given a satisfiable instance of $\phi$, we can observe there is a rainbow shortest path between every pair of vertices. Thus we get the following.
\begin{corollary}
\label{thm_src_is_npc_for_chordal}
\textsc{Strong rainbow connectivity} is $\NP$\hyp{}complete when restricted to the class of interval outerplanar graphs.
\end{corollary}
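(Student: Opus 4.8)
The plan is to recycle, without any change, the reduction that proves Theorem~\ref{thm_rc_is_hard_for_chordal}: from a $3$\hyp{}\textsc{Occurrence} $3$\hyp{}\textsc{SAT} instance $\phi$ we produce the interval outerplanar graph $G^M_\phi$ together with the edge-coloring $\chi_M$, and we claim that $\phi$ is satisfiable if and only if $G^M_\phi$ is strong rainbow connected under $\chi_M$. Membership in $\NP$ is immediate: a certificate consists of one colored path for every pair of vertices, and checking that each such path is both shortest and rainbow takes polynomial time. One implication is free, since if $G^M_\phi$ is strong rainbow connected then it is rainbow connected, and the correctness of Theorem~\ref{thm_rc_is_hard_for_chordal} — that is, the extension of Lemma~\ref{lemma_g_rainbow_conn} to $G^M_\phi$ argued in its proof — then forces $\phi$ to be satisfiable. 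Hence all the work lies in the converse.

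For the converse I would begin from the facts already recorded in the proof of Theorem~\ref{thm_rc_is_hard_for_chordal}: none of the chords added inside a variable gadget shortens $d(a_i,b_i)$, none of the chords added inside a clause gadget shortens $d(p_j,q'_j)$, and every edge of $U$ as well as every edge of $\{(q'_j,p_{j+1})\mid 1\le j<m\}\cup\{(q'_m,t)\}$ stays a bridge. Because of the bridges, every shortest $s_1$--$t$ path in $G^M_\phi$ must traverse $s_1,\dots,s_m$, then the variable gadgets in order, then the clause gadgets in order, then the edge $(q'_m,t)$; and because the relevant intra-gadget distances are unchanged, the particular $s_1$--$t$ path that \citet{Uchizawa2013} use to witness satisfiability — running through the positive or negative path of each $X^M_i$ according to a fixed satisfying assignment and through each clause gadget $C^M_j$ via a literal edge that the assignment satisfies — has length equal to $d_{G^M_\phi}(s_1,t)$, hence is a \emph{shortest} path, and it is rainbow exactly as in Lemma~\ref{lemma_g_rainbow_conn}. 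For an arbitrary pair $(x,y)$ one then takes the corresponding sub-path of this global path together with the short detours into the two gadgets containing $x$ and $y$ described by \citet{Uchizawa2013}; such a sub-path is again shortest, and the colors along it form a subset of the colors along the global path, hence are pairwise distinct.

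The only genuine obstacle is the bookkeeping. One must run through all shapes of vertex pairs — both endpoints inside one gadget, endpoints in two consecutive gadgets, endpoints involving the pendant vertices $s_j$ or $t$ — and check in each case that the exhibited path attains the distance and repeats no color, paying particular attention to the newly added chords, which carry the colors $\overline{c}_i$ and $c'_j$ already used on bridges. The point is that these chords create only local shortcuts, each being a single edge joining two vertices of the same gadget that were at distance two, so any pair forced to use such a chord lies at distance one or two and is trivially served by a rainbow shortest path; while for all the pairs that mattered in $G_\phi$ neither the distances nor the color classes $\{c_{i,k}\},\{\overline{c}_{i,k}\},\{c_j\},\{c'_j\}$ have changed, so the pair-by-pair verification of \citet{Uchizawa2013} transfers verbatim. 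This yields the corollary.
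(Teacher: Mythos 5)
Your proposal is correct and follows essentially the same route as the paper: the paper reuses the construction of Theorem~\ref{thm_rc_is_hard_for_chordal} verbatim and simply observes that, given a satisfiable $\phi$, every pair of vertices of $G^M_\phi$ is joined by a rainbow \emph{shortest} path (the unsatisfiable direction being inherited from rainbow connectivity). Your write-up merely spells out in more detail the bridge/distance bookkeeping that the paper leaves as an observation.
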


\subsection{Rainbow connectivity is $\NP$\hyp{}complete for interval block graphs}
In this subsection, we prove \textsc{Rainbow connectivity} is $\NP$\hyp{}complete for interval block graphs, which form a subclass of chordal graphs, and also generalize trees. It is worth noting that unlike in Theorems~\ref{thm_rc_is_npc_for_planar_graphs} and~\ref{thm_rc_is_hard_for_chordal}, the reduction we give next does not show hardness of \textsc{Strong rainbow connectivity} for block graphs.

\begin{theorem}
\label{thm_rc_is_npc_for_block}
\textsc{Rainbow connectivity} is $\NP$\hyp{}complete when restricted to the class of interval block graphs.
\end{theorem}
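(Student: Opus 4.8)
The plan is to reuse the reduction from $3$\hyp{}\textsc{Occurrence} $3$\hyp{}\textsc{SAT} behind Theorem~\ref{thm_rc_is_npc_for_planar_graphs}, but to replace the two cyclic gadgets by cliques so that the graph becomes a block graph, while keeping all of its blocks strung together in a path so that it stays interval. Membership in $\NP$ is argued exactly as before. Given $\phi = \bigwedge_{j=1}^{m} c_j$ over $x_1,\dots,x_n$, I keep the global skeleton of $G_\phi$: a path $s_1 - s_2 - \cdots - s_m$ whose edges together with $(s_m,a_1)$ carry the colours $c_1,\dots,c_m$; variable gadgets $X_1,\dots,X_n$ chained by bridges $(b_i,a_{i+1})$; clause gadgets $C_1,\dots,C_m$ chained by bridges $(q'_j,p_{j+1})$; the bridge $(b_n,p_1)$; and a terminal vertex $t$ reached by the bridge $(q'_m,t)$.

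I then make $X_i$ the clique on the six vertices $a_i, u_i, v_i, b_i, \overline v_i, \overline u_i$, listed in this cyclic order. Its $C_6$\hyp{}edges keep the colouring of Theorem~\ref{thm_rc_is_npc_for_planar_graphs}: the positive $a_i$--$b_i$ path $a_i - u_i - v_i - b_i$ gets $c_{i,1},c_{i,2},c_{i,3}$ and the negative path $a_i - \overline u_i - \overline v_i - b_i$ gets $\overline c_{i,1},\overline c_{i,2},\overline c_{i,3}$, while each of the nine chords of the clique is coloured with the colour of the bridge that enters $X_i$ along the $s_1$--$t$ route (this colour is $c_m$ for $X_1$, and for $i>1$ it is the colour of $(b_{i-1},a_i)$). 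I make $C_j$, for a size\hyp{}three clause, the clique on $p_j, q'_j, w_{j,1}, w_{j,2}, w_{j,3}$: colour the edge $p_j q'_j$ with the clause colour $c_j$, colour one edge of each detour $p_j - w_{j,k} - q'_j$ with the literal colour that Theorem~\ref{thm_rc_is_npc_for_planar_graphs} gives the chord $(r_{j,k},r'_{j,k})$, and give every remaining edge of this clique a fresh colour; clauses of size one and two use a triangle and a $K_4$ in the same spirit (Appendix). Every block of the resulting graph $G^B_\phi$ is then a clique, and these blocks — the $s$\hyp{}path edges, the $K_6$'s, the $K_5$'s and the bridge $K_2$'s — line up in a path joined at single cut vertices, so $G^B_\phi$ is an interval block graph, produced in polynomial time.

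It remains to re\hyp{}establish the analogue of Lemma~\ref{lemma_g_rainbow_conn}. Before it reaches $a_1$, any $s_1$--$t$ path has already used all of $c_1,\dots,c_m$ together with every bridge colour entering an $X_i$; it can therefore use no chord of any $X_i$ and no edge $p_j q'_j$ of any $C_j$, so in each $X_i$ it is forced onto the positive or the negative path and in each $C_j$ it must detour through some $w_{j,k}$, paying precisely the literal colour of the $k$\hyp{}th chord of $c_j$. The interplay between the variable and clause gadgets is thus exactly the one analysed by \citet{Uchizawa2013}, so a rainbow $s_1$--$t$ path exists if and only if $\phi$ is satisfiable; and when it exists one checks, as there, that all other pairs are rainbow connected, using that a path avoiding the $s$\hyp{}path and the $X_i$\hyp{}entering bridges never reuses a colour $c_j$ or a chord colour and may hence run freely through the edges $p_jq'_j$ and the chords, while a path starting at an interior vertex of a $K_6$ can bypass that variable altogether. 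I expect this last verification — ruling out that a partial traversal beginning inside some fattened gadget forces a globally inconsistent assignment — to be the main technical obstacle. Finally, because the rainbow $s_1$--$t$ path detours around a direct edge inside every clause clique it is never a shortest path, which is exactly why this construction, unlike those for Theorems~\ref{thm_rc_is_npc_for_planar_graphs} and~\ref{thm_rc_is_hard_for_chordal}, yields nothing for \textsc{Strong rainbow connectivity}.
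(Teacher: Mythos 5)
Your overall strategy is the paper's: fatten each gadget into a clique, kill the new chords on the critical $s_1$--$t$ path by giving them the colour of a bridge that the path is forced to cross, check that the blocks line up so the graph is interval block, and then invoke the $s_1$--$t$ analysis of Lemma~\ref{lemma_g_rainbow_conn}. The paper, however, does not redesign the clause gadget: it keeps the full $C_{10}$ of Theorem~\ref{thm_rc_is_hard_for_chordal} and simply adds \emph{all} missing chords, colouring every added chord (in both $X^B_i$ and $C^B_j$) with the colour of the adjacent bridge ($\overline{c}_i$, respectively $c'_j$), so that no added edge can ever lie on a rainbow $s_1$--$t$ path and the literal-selection mechanism is literally untouched. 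Your variable gadget is equivalent to this (entering versus leaving bridge makes no difference), but your replacement of the clause gadget by a $K_5$ on $p_j,q'_j,w_{j,1},w_{j,2},w_{j,3}$ changes the mechanism rather than merely preserving it, and that is where a genuine gap appears.

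Concretely: you colour ``one edge of each detour $p_j\hbox{--}w_{j,k}\hbox{--}q'_j$'' with the literal colour and make every remaining clique edge fresh, but you do not say \emph{which} edge. If, say, the literal colour of $l_{j,1}$ sits on $(w_{j,1},q'_j)$ while that of $l_{j,2}$ sits on $(p_j,w_{j,2})$, then the path $p_j\to w_{j,1}\to w_{j,2}\to q'_j$ uses three fresh colours and crosses the clause gadget without paying any literal colour at all; an unsatisfiable $\phi$ would then still admit a rainbow $s_1$--$t$ path and the reduction fails. The fix is easy (always put the literal colour on the edge incident to $p_j$, so that every non-direct exit from $p_j$ pays some literal colour; or, as the paper does, give all non-literal chords the bridge colour $c'_j$ so they are unusable outright), but as written the construction is not determined up to correctness. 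Separately, you explicitly defer the verification that all \emph{other} vertex pairs are rainbow connected in the satisfiable case; since your clause gadget has a different vertex set from $G_\phi$, this cannot be inherited verbatim from \citet{Uchizawa2013} and would need to be carried out for the new vertices $w_{j,k}$, whereas the paper sidesteps this by keeping the original vertex set and arguing that its modifications do not contradict Lemma~\ref{lemma_g_rainbow_conn}.
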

\begin{proof}
We assume the same terminology as in Theorem~\ref{thm_rc_is_hard_for_chordal}. Given a $3$\hyp{}\textsc{Occurrence} $3$\hyp{}\textsc{SAT} instance $\phi$, we first build a graph $G^M_\phi$ along with its edge-coloring $\chi_M$ precisely as in Theorem~\ref{thm_rc_is_hard_for_chordal}. For clarity, we rename $G^M_\phi$ to $G^B_\phi$, and $\chi_M$ to $\chi_B$.

We obtain an $X^B_i$ by adding to $X^M_i$ all the possible chords, that is, the edges $(a_i,\overline{v}_i)$, $(a_i,b_i)$, $(a_i,v_i)$, $(u_i,b_i)$, $(v_i,\overline{u}_i)$, and $(b_i,\overline{u}_i)$. Each of these chords receive the color $\overline{c}_i$. We also add all possible chords to every $C^M_j$, and thus obtain the clause gadget $C^B_j$. Formally, we add to $G^B_\phi$ the edges in
\begin{equation*}
\begin{split}
Z &= \{ (p_j,r_{j,2}), (p_j,r_{j,3}), (p_j,q_j), (p_j,q'_j), (p_j,r'_{j,3}), (p_j,r'_{j,2}), (p_j,r'_{j,1}) \mid 1 \leq j \leq m \} \\
\quad &\cup \{ (r_{j,1},r_{j,3}), (r_{j,1},q_j), (r_{j,1},q'_j), (r_{j,1},r'_{j,3}), (r_{j,1},r'_{j,2}) \mid 1 \leq j \leq m \} \\
\quad &\cup \{ (r_{j,2},q_j), (r_{j,2},q'_j), (r_{j,2},r'_{j,3}), (r_{j,2},p'_j) \mid 1 \leq j \leq m \} \\
\quad &\cup \{ (r_{j,3},q'_j), (r_{j,3},r'_{j,1}), (r_{j,3},p'_j) \mid 1 \leq j \leq m \} \\
\quad &\cup \{ (q_j,r'_{j,2}), (q_j,r'_{j,1}), (q_j,p'_j) \mid 1 \leq j \leq m \} \\
\quad &\cup \{ (q'_j,r'_{j,2}), (q'_j,r'_{j,1}), (q'_j, p'_j) \mid 1 \leq j \leq m \} \\
\quad &\cup \{ (r'_{j,3},r'_{j,1}), (r'_{j,3},p'_j) \mid 1 \leq j \leq m \} \\
\quad &\cup \{ (r'_{j,2},p'_j) \mid 1 \leq j \leq m \}.
\end{split}
\end{equation*}
Each edge in $Z$ receives the color $c'_j$. This completes the construction of $G^B_\phi$. Clearly, $G^B_\phi$ is now a block graph, with each block being a $K_2$, a $K_6$, or a $K_{10}$. Furthermore, it is easy to see $G^B_\phi$ admits a clique tree that is path. Thus, $G^B_\phi$ is both interval and block.

By an argument similar to Theorem~\ref{thm_rc_is_hard_for_chordal}, none of the newly added chords can be on a rainbow path from $s_1$ to $t$. Thus, Lemma~\ref{lemma_g_rainbow_conn} still holds, and we have the theorem.
\end{proof}

In the previous construction, the key difference to Theorem~\ref{thm_rc_is_hard_for_chordal} is that the distance between any pair of vertices in $X^B_i$ is one, as is the distance between any pair of vertices in $C^B_j$. Therefore, given a positive instance of $\phi$, it is not true that every pair of vertices in $G^B_\phi$ would be connected by a rainbow shortest path.

\subsection{Rainbow and strong rainbow connectivity are $\NP$\hyp{}complete for $k$-regular graphs}
In this subsection, we prove both \textsc{Rainbow connectivity} and \textsc{Strong rainbow connectivity} remain $\NP$\hyp{}complete for $k$-regular graphs, for $k \geq 3$. We begin by proving hardness for cubic graphs, that is, for $k = 3$. We use this construction as a building block for proving hardness for $k$-regular graphs, where $k > 3$.

\begin{figure}
\bsubfloat[]{%
  \includegraphics[scale=1]{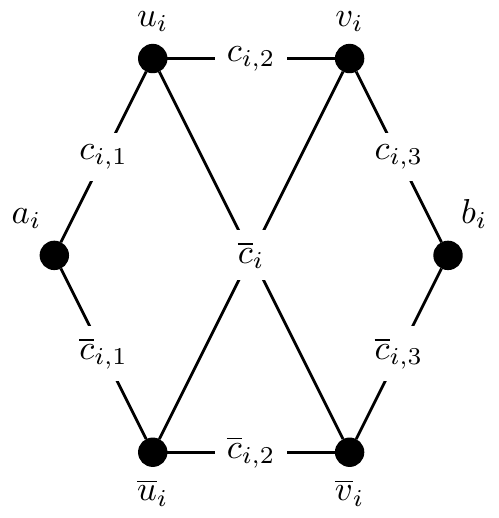}%
}
\bsubfloat[]{%
  \includegraphics[scale=1]{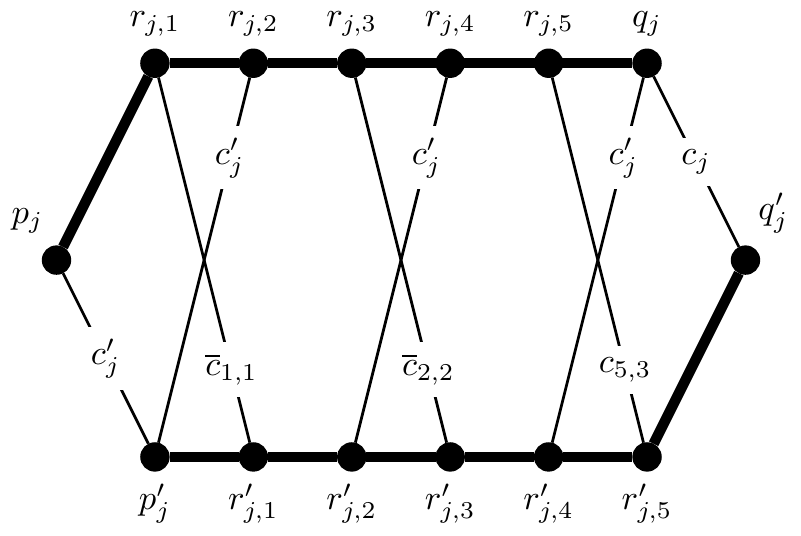}%
}
\bsubfloati\qquad\bsubfloatii
\caption{\textbf{(a)} A variable gadget $X^\Delta_i$ for the variable $x_i$, and \textbf{(b)} a clause gadget $C^\Delta_j$ for the clause $c_j = (x_1 \vee x_2 \vee \neg x_5)$, where $x_1$ is the first literal of $x_1$, $x_2$ is the second literal of $x_2$, and $\neg x_5$ is the third literal of $x_5$.}
\label{fig_cubic_var_clause_gadget}
\end{figure}

\begin{theorem}
\label{thm_rc_npc_cubic}
\textsc{Rainbow connectivity} is $\NP$\hyp{}complete when restricted to the class of cubic graphs.
\end{theorem}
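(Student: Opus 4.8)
The plan is to reduce from $3$\hyp{}\textsc{Occurrence} $3$\hyp{}\textsc{SAT}, reusing the graph $G_\phi$ and its coloring $\chi$ from Theorem~\ref{thm_rc_is_npc_for_planar_graphs} almost verbatim and only \emph{padding} it so that every vertex ends up with degree exactly $3$. Membership in $\NP$ is immediate, exactly as before: a certificate is a family of rainbow paths, one for each pair of vertices, and checking that a given path is rainbow takes polynomial time.

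The key observation is that no vertex of $G_\phi$ has degree larger than $3$: the vertices $s_1$ and $t$ have degree $1$; the vertices $u_i,v_i,\overline{u}_i,\overline{v}_i$ of each $X_i$, the vertices $q_j,p'_j$ of each $C_j$, and the path vertices $s_2,\dots,s_m$ have degree $2$; every other vertex already has degree $3$. So it suffices to raise the degree of the low-degree vertices locally. I would fix once and for all a connected graph $H$ on five vertices with exactly one vertex of degree $2$ and four vertices of degree $3$ --- for instance, a copy of $K_4$ minus an edge together with one extra vertex joined to its two degree-$2$ vertices. From $G_\phi$ I would form $G^\Delta_\phi$ by attaching, to each degree-$2$ vertex $w$ of $G_\phi$, a fresh copy of $H$ via one new edge from $w$ to the degree-$2$ vertex of that copy, and by attaching two such copies to each of $s_1$ and $t$ in the same fashion. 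The resulting coloring $\chi^\Delta$ agrees with $\chi$ on the edges of $G_\phi$ and gives every newly added edge (attaching edges included) its own fresh color. The gadgets $X^\Delta_i$ and $C^\Delta_j$ shown in Figure~\ref{fig_cubic_var_clause_gadget} arise from $X_i$ and $C_j$ by this local padding of their degree-$2$ vertices. By construction $G^\Delta_\phi$ is a simple cubic graph of size linear in $|\phi|$, obtained in polynomial time.

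For correctness, I would use that each attached copy of $H$ is joined to the rest of $G^\Delta_\phi$ only through its anchor vertex $w$, in fact by a bridge. Hence no simple path between two vertices of $G_\phi$ ever enters a copy of $H$, and a simple path that does enter one cannot leave it again. In particular every rainbow path between $s_1$ and $t$ in $G^\Delta_\phi$ uses only vertices and edges of $G_\phi$, so --- as $\chi^\Delta$ restricts to $\chi$ there --- it is a rainbow path between $s_1$ and $t$ in $G_\phi$, and by Lemma~\ref{lemma_g_rainbow_conn} its mere existence forces $\phi$ to be satisfiable. Conversely, if $\phi$ is satisfiable then $G_\phi$ is rainbow connected under $\chi$ by Lemma~\ref{lemma_g_rainbow_conn}; then, given any two vertices of $G^\Delta_\phi$, I would route from each of them to a vertex of $G_\phi$ (the vertex itself, or its anchor reached by a rainbow path inside its copy of $H$ together with the attaching edge) and concatenate these with a rainbow path of $G_\phi$ between those two vertices --- the at most three segments involved use pairwise disjoint color sets, so the result is rainbow (and if both vertices lie in one copy of $H$, a path inside that copy already works). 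Therefore $G^\Delta_\phi$ is rainbow connected under $\chi^\Delta$ if and only if $\phi$ is satisfiable, which together with membership in $\NP$ yields the theorem.

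The point to be careful about is precisely this last design choice: the padding must not create any rainbow path between $s_1$ and $t$ that circumvents the gadget logic, since a single such path would make $G^\Delta_\phi$ rainbow connected regardless of $\phi$. Attaching each copy of $H$ \emph{locally}, as a bridge dangling from a single anchor, is what guarantees this: it adds no path at all between vertices of $G_\phi$, and its fresh colors make it irrelevant to every path that neither starts nor ends inside it. A globally wired padding --- for example threading all the new vertices onto one long cycle through $s_1$ and $t$ --- would be fatal, as that cycle alone would already be a rainbow path between $s_1$ and $t$.
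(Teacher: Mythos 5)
Your proof is correct, but it takes a genuinely different route from the paper. The paper makes each gadget \emph{internally} cubic: it adds the chords $(u_i,\overline{v}_i)$ and $(\overline{u}_i,v_i)$ to every variable gadget, enlarges each clause gadget from a $C_{10}$ to a $C_{14}$ with extra ``spacer'' chords, replaces the tail path by a ladder, and attaches a single head gadget --- and it colors the new chords with \emph{already used} colors ($\overline{c}_i$, $c'_j$, the colors of the bridges) so that none of them can lie on a rainbow $s_1$--$t$ path, which requires re-running the argument of Lemma~\ref{lemma_g_rainbow_conn} on the modified gadgets. You instead leave $G_\phi$ and $\chi$ untouched and hang a pendant degree-correcting gadget, joined by a bridge and colored with fresh colors, off every degree-deficient vertex; amusingly, your gadget $H$ is exactly the paper's head gadget, so you have in effect generalized the paper's head-gadget trick to all deficient vertices. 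Your approach buys modularity and a shorter correctness argument: since the pendants add no new paths between vertices of $G_\phi$ and carry only fresh colors, Lemma~\ref{lemma_g_rainbow_conn} applies verbatim and the only new work is routing pairs involving pendant vertices, which your three-segment concatenation handles cleanly. The paper's approach keeps the construction compact and, because distances inside the gadgets are controlled explicitly, makes the strong rainbow variant (Corollary~\ref{thm_src_npc_cubic}) an immediate by-product; your construction would in fact also yield that corollary, since the pendants do not alter distances within $G_\phi$ and every path inside a freshly colored pendant is rainbow, though you do not claim it.
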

\begin{proof}
We assume the terminology of Theorem~\ref{thm_rc_is_npc_for_planar_graphs}. Given a $3$\hyp{}\textsc{Occurrence} $3$\hyp{}\textsc{SAT} instance $\phi$, we first construct a graph $G^\Delta_\phi$, and then its edge-coloring $\chi_\Delta$.

We begin very similarly to Theorem~\ref{thm_rc_is_npc_for_planar_graphs}. A variable gadget $X^\Delta_i$ is built for every variable $x_i$, $i \in [n]$, by starting from an $X_i$ and adding two chords $(u_i,\overline{v_i})$ and $(\overline{u}_i,v_i)$. For each clause $c_j$, $j \in [m]$, we build a clause gadget $C^\Delta_j$. A clause gadget $C^\Delta_j$ is built by starting from a cycle graph $C_{14}$ embedded in the plane on vertices $p_j$, $r_{j,1}$, $r_{j,2}$, $r_{j,3}$, $r_{j,4}$, $r_{j,5}$, $q_j$, $q'_j$, $r'_{j,5}$, $r'_{j,4}$, $r'_{j,3}$, $r'_{j,2}$, $r'_{j,1}$, $p'_j$ in clockwise order, and by adding chords $(r_{j,1},r'_{j,1})$, $(r_{j,3},r'_{j,3})$, $(r_{j,5},r'_{j,5})$, $(p'_j,r_{j,2})$, $(r'_{j,2},r_{j,4})$, and $(r'_{j,4},q_j)$. The chords $(r_{j,1},r'_{j,1})$, $(r_{j,3},r'_{j,3})$, and $(r_{j,5},r'_{j,5})$ correspond to the three literals each clause has. Both a variable gadget and a clause gadget are shown in Figure~\ref{fig_cubic_var_clause_gadget}.

We then construct a \textit{tail gadget}, which is done by starting with two path graphs on $m-1$ vertices $s_1,\ldots,s_{m-1}$ and $s'_1,\ldots,s'_{m-1}$, respectively. Then, we add the edges $(s_j,s'_j)$ for each $3 \leq j \leq m-1$, and three edges $(s_1,s'_1)$, $(s'_1,s_2)$, and $(s_1,s'_2)$. Finally, we add a vertex $a_0$, and two edges $(s_{m-1},a_0)$ and $(s'_{m-1},a_0)$. The last gadget we build is a \textit{head gadget}. A head gadget is built by starting from a $K_4$ on vertices $t_1$, $t_2$, $t_3$, and $t_4$ with the edge $(t_1,t_2)$ removed. We then add the vertex $t_0$, and finally the edges $(t_0,t_1)$ and $(t_0,t_2)$. Both a tail gadget and a head gadget are shown in Figure~\ref{fig_tail_head_gadgets}.

We connect $X^\Delta_i$ with $X^\Delta_{i+1}$ by adding an edge $(b_i,a_{i+1})$ for each $1 \leq i < n$. Then, we connect $C^\Delta_j$ with $C^\Delta_{j+1}$ by adding an edge $(q'_j,p_{j+1})$ for each $1 \leq j < m$. These two components are connected by adding the edge $(b_n,p_1)$. The head gadget is connected to $G^\Delta_\phi$ by adding the edge $(t_0,q'_m)$, and the tail gadget by adding the edge $(a_0,a_1)$. This completes the construction of $G^\Delta_\phi$. We can now verify that $G^\Delta_\phi$ is indeed cubic.

We then describe the edge-coloring $\chi_\Delta$ of $G^\Delta_\phi$. The positive $X^\Delta_i$ path and the negative $X^\Delta_i$ path are colored precisely as in Theorem~\ref{thm_rc_is_npc_for_planar_graphs}. The two chords $(u_i,\overline{v}_i)$ and $(\overline{u}_i,v_i)$ receive the color $\overline{c}_i$, as does each edge in $U = \{ (b_i, a_{i+1}) \mid 1 \leq i < n \} \cup \{ (b_n,p_1) \}$. The coloring of a variable gadget $X^\Delta_i$ is illustrated in Figure~\ref{fig_cubic_var_clause_gadget} (a).

In a clause gadget $C^\Delta_j$, we color the edge $(p_j,p'_j)$ with the color $c_j'$, and the edge $(q_j,q'_j)$ with the color $c_j$. The three chords $(p'_j,r_{j,2})$, $(r'_{j,2},r_{j,4})$, and $(r'_{j,4},q_j)$ are colored with the color $c'_j$. For each $k \in \{1,2,3\}$, we color the edge $(r_{j,{2k-1}},r'_{j,{2k-1}})$ as follows:
\[
 \chi_\Delta((r_{j,k},r'_{j,k})) = 
 \begin{dcases*}
        \overline{c}_{i,1} 	& if $l_{j,k}$ is a positive literal and the first literal of $x_i$ \\
        \overline{c}_{i,2} 	& if $l_{j,k}$ is a positive literal and the second literal of $x_i$ \\
        \overline{c}_{i,3} 	& if $l_{j,k}$ is a positive literal and the third literal of $x_i$ \\
        c_{i,1} 		& if $l_{j,k}$ is a negative literal and the first literal of $x_i$ \\
        c_{i,2} 		& if $l_{j,k}$ is a negative literal and the second literal of $x_i$ \\
        c_{i,3} 		& if $l_{j,k}$ is a negative literal and the third literal of $x_i$
  \end{dcases*}
\]
The edge $(q'_j,p_{j+1})$, for each $1 \leq j < m$, receives the color $c'_j$, while the edge $(q'_m,t_0)$ is colored with $c'_m$. The coloring of a clause gadget $C^\Delta_j$ is shown in Figure~\ref{fig_cubic_var_clause_gadget} (b).

For each $1 \leq j < m-1$, we color the edge $(s_j,s_{j+1})$ with the color $c_j$, and also the edge $(s'_j,s'_{j+1})$ with the color $c_j$. The edges $(s_1,s'_2)$ and $(s_2,s'_1)$ both receive the color $c_1$. The edges $(s_{m-1},a_0)$ and $(s'_{m-1},a_0)$ both receive the color $c_{m-1}$. The bridge $(a_0,a_1)$ receives the color $c_m$. The coloring of a tail gadget is shown in Figure~\ref{fig_tail_head_gadgets} (a). Every other uncolored edge of $G^\Delta_\phi$ receives a fresh new color, that does not appear in $G^\Delta_\phi$. Formally, these are precisely the edges in
\begin{equation*}
\begin{split}
Q &= \{ (p_j,r_{j,1}), (r_{j,1},r_{j,2}), (r_{j,2},r_{j,3}), (r_{j,3},r_{j,4}), (r_{j,4},r_{j,5}), (r_{j,5},q_j) \mid 1 \leq j \leq m \} \\
&\quad \cup \{(q'_j,r'_{j,5}), (r'_{j,5},r'_{j,4}), (r'_{j,4},r'_{j,3}), (r'_{j,3},r'_{j,2}), (r'_{j,2},r'_{j,1}), (r'_{j,1},p'_j) \mid 1 \leq j \leq m \} \\
&\quad \cup \{ (s_j, s'_j) \mid 3 \leq j \leq m-1 \} \\
&\quad \cup \{ (s_1,s'_1) \} \\
&\quad \cup \{ (t_0,t_1), (t_0,t_2), (t_1,t_3), (t_1,t_4), (t_2,t_3), (t_2,t_4), (t_3,t_4) \}.
\end{split}
\end{equation*}

The edges in $Q$ correspond precisely to the edges drawn with thick lines in Figures~\ref{fig_cubic_var_clause_gadget} and~\ref{fig_tail_head_gadgets}. This completes the edge-coloring $\chi_\Delta$ of $G^\Delta_\phi$.

Let us rename $t_0$ as $t$. By an argument similar to Theorem~\ref{thm_rc_is_hard_for_chordal}, we can show there is a rainbow path between $s_1$ and $t$ (and similarly between $s'_1$ and $t$) if and only if $\phi$ is satisfiable.
\end{proof}

\begin{figure}
\bsubfloat[]{%
  \includegraphics[scale=1]{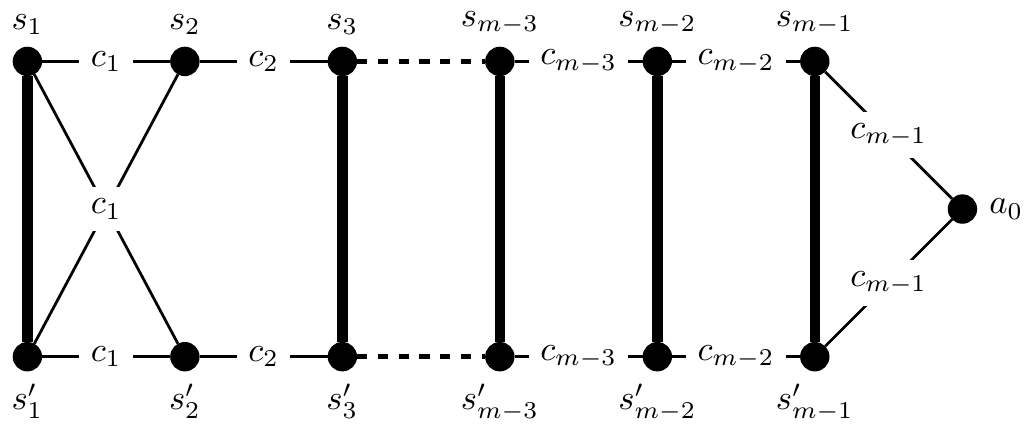}%
}
\bsubfloat[]{%
  \includegraphics[scale=1]{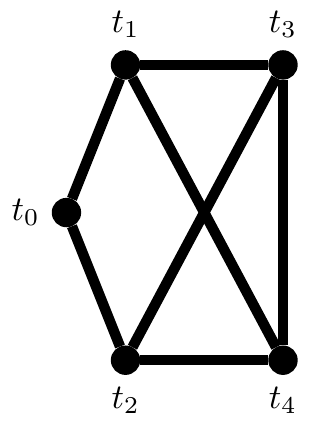}%
}
\bsubfloati\qquad\bsubfloatii
\caption{\textbf{(a)} A tail gadget, and \textbf{(b)} a head gadget.}
\label{fig_tail_head_gadgets}
\end{figure}

\noindent Again, in the positive case, every pair of vertices has a rainbow shortest path between them further giving us the following.
\begin{corollary}
\label{thm_src_npc_cubic}
\textsc{Strong rainbow connectivity} is $\NP$\hyp{}complete when restricted to the class of cubic graphs.
\end{corollary}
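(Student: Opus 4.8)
The plan is to reuse verbatim the graph $G^\Delta_\phi$ and the edge-coloring $\chi_\Delta$ built in the proof of Theorem~\ref{thm_rc_npc_cubic}; that graph is already cubic and constructible in polynomial time, so all that remains is to strengthen the equivalence ``rainbow connected iff $\phi$ satisfiable'' to ``strong rainbow connected iff $\phi$ satisfiable''. Membership in $\NP$ is routine: a certificate is a family of paths, one per unordered pair of vertices, and one verifies in polynomial time that each is rainbow and that its length equals the distance between its endpoints (computed by breadth-first search).

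One implication is immediate. If $G^\Delta_\phi$ is strong rainbow connected under $\chi_\Delta$ then it is rainbow connected, and by the $s_1$--$t$ analysis inside the proof of Theorem~\ref{thm_rc_npc_cubic} this forces $\phi$ to be satisfiable. So the entire task is the converse: from a satisfying assignment $\tau$ of $\phi$, exhibit a rainbow \emph{shortest} path between every pair of vertices of $G^\Delta_\phi$.

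For the converse I would repeat the rainbow-connectivity argument while keeping track of lengths. The construction has only a handful of bridges---namely $(a_0,a_1)$, the edges of $U = \{(b_i,a_{i+1})\} \cup \{(b_n,p_1)\}$, the edges $(q'_j,p_{j+1})$, and $(q'_m,t_0)$---and deleting them splits $G^\Delta_\phi$ into the tail gadget, the variable gadgets $X^\Delta_i$, the clause gadgets $C^\Delta_j$, and the head gadget, strung together in a path. Hence for $u,v$ lying in different gadgets, every $u$--$v$ path uses a fixed set of bridges and passes through a fixed chain of intermediate gadgets, entering and leaving each through its two ports ($a_i$ and $b_i$ for $X^\Delta_i$, and $p_j$ and $q'_j$ for $C^\Delta_j$); consequently $d(u,v)$ is the sum of an intra-gadget distance in each of the two endpoint gadgets, the bridge edges, and the port-to-port distance of each intermediate gadget. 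It therefore suffices to check two local facts: (i) within each gadget, and with $\tau$ fixed, the distance between every pair of its vertices, and between a vertex and each port, is realized by a rainbow path; and (ii) the positive or negative $X^\Delta_i$ path and the literal-chord traversal of $C^\Delta_j$ selected by $\tau$---the very paths used to route the long $s_1$--$t$ path in Theorem~\ref{thm_rc_npc_cubic}---have length equal to the corresponding port-to-port distance, i.e.\ $d(a_i,b_i) = 3$ in $X^\Delta_i$ and $d(p_j,q'_j) = 7$ in $C^\Delta_j$. Granting (i) and (ii): for a pair inside one gadget use the rainbow geodesic from (i); for a cross-gadget pair glue the endpoint-gadget rainbow geodesics to the port-to-port geodesics of the intermediate gadgets, which is a geodesic by the bridge decomposition and which is rainbow by the same colour accounting as in Theorem~\ref{thm_rc_npc_cubic}---the fresh colours on the thick edges in $Q$ and on the bridges are pairwise distinct, each gadget-selected colour ($c_{i,k}$, $\overline{c}_{i,k}$, $c_j$, $c'_j$) recurs at most once along the route, and $\tau$ satisfying $\phi$ rules out the single possible clash inside each clause gadget.

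The main obstacle is the mechanical but error-prone verification of (i) and (ii): one must confirm that none of the chords added to $X_i$ and to $C_{14}$ to make the graph cubic, nor the rungs of the tail gadget or the edges of the head gadget, create a shortcut that would render a $\tau$-selected rainbow path non-geodesic, and that the two near-parallel spines $s_1,\dots,s_{m-1}$ and $s'_1,\dots,s'_{m-1}$ of the tail gadget---which share their low-index rungs---still reach every vertex along a rainbow geodesic. Checking each gadget in isolation against Figures~\ref{fig_cubic_var_clause_gadget} and~\ref{fig_tail_head_gadgets} does this real work; everything else is bridge-decomposition bookkeeping. Clauses of size one or two are handled by the analogous gadgets in the Appendix.
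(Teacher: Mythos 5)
Your proposal matches the paper's approach exactly: the paper proves this corollary by reusing the construction of Theorem~\ref{thm_rc_npc_cubic} verbatim and simply observing that, in the satisfiable case, every pair of vertices is joined by a rainbow \emph{shortest} path. Your bridge-decomposition bookkeeping and the local checks (i)--(ii) just spell out the verification that the paper leaves as a one-line observation, so the argument is correct and essentially identical in substance.
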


We are now ready to prove the hardness of both problems for $k$-regular graphs, where $k > 3$.

\begin{theorem}
\label{thm_rc_npc_kregular}
\textsc{Rainbow connectivity} is $\NP$\hyp{}complete when restricted to the class of $k$-regular graphs, where $k > 3$.
\end{theorem}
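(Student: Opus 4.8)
The plan is to bootstrap from the cubic case, Theorem~\ref{thm_rc_npc_cubic}, using a single ``doubling'' operation applied $k-3$ times. The invariant I want to maintain is: for every $k\ge 3$ and every $3$\hyp{}\textsc{Occurrence} $3$\hyp{}\textsc{SAT} instance $\phi$ one can build in polynomial time a $k$-regular edge-coloured graph $H^{(k)}_\phi$ with two designated vertices $s,t$ such that $H^{(k)}_\phi$ is rainbow connected $\iff$ it contains a rainbow $s$--$t$ path $\iff$ $\phi$ is satisfiable. For $k=3$ this is precisely Theorem~\ref{thm_rc_npc_cubic}, taking $H^{(3)}_\phi=G^\Delta_\phi$ with the colouring $\chi_\Delta$, $s=s_1$, and $t=t_0$. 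For the inductive step, given a $(k-1)$-regular $H^{(k-1)}_\phi$, let $H^{(k)}_\phi$ be two disjoint copies $H_1,H_2$ of $H^{(k-1)}_\phi$ together with the perfect matching $M=\{\,v_1v_2 : v\in V(H^{(k-1)}_\phi)\,\}$ pairing the copies; colour $H_1$ and $H_2$ exactly as $H^{(k-1)}_\phi$, and colour \emph{every} edge of $M$ with one new colour $\delta$. Every vertex then has degree $(k-1)+1=k$, so $H^{(k)}_\phi$ is $k$-regular, and its designated vertices are the copies $s_1,t_1$ of $s,t$ in $H_1$. (Since $M$ is merely the matching between the two copies, it always exists, so no parity considerations arise.)

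The reason for giving all of $M$ the single colour $\delta$ is that a path leaving $H_1$ along an edge of $M$ must re-enter $H_1$ along another edge of $M$, repeating $\delta$; hence any rainbow path between two vertices of $H_1$ lies entirely inside $H_1$. In particular, a rainbow $s_1$--$t_1$ path in $H^{(k)}_\phi$ is a rainbow $s$--$t$ path of $H^{(k-1)}_\phi$, and by the induction hypothesis this exists iff $\phi$ is satisfiable. Conversely, if $\phi$ is satisfiable then $H^{(k-1)}_\phi$ and hence each of $H_1,H_2$ is rainbow connected, and so is $H^{(k)}_\phi$: two vertices in the same copy are joined by a rainbow path of that copy, while $u_1\in V(H_1)$ and $w_2\in V(H_2)$ are joined by a rainbow path inside $H_1$ from $u_1$ to $w_1$ followed by the matching edge $w_1w_2$, which is rainbow since $\delta$ occurs on no edge of $H_1\cup H_2$ (and the case $u=w$ is just that matching edge). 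So $H^{(k)}_\phi$ is rainbow connected $\iff$ it has a rainbow $s_1$--$t_1$ path $\iff$ $\phi$ is satisfiable, re-establishing the invariant. Membership in $\NP$ is immediate (one rainbow path per pair is a certificate), so $\NP$-completeness for $k$-regular graphs with $k>3$ follows.

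I do not anticipate a real difficulty; the only delicate point is the direction ``$\phi$ satisfiable $\Rightarrow H^{(k)}_\phi$ rainbow connected'', where one has to be sure that the shortcuts introduced by $M$ never destroy a needed rainbow path — which is exactly why $M$ receives a single fresh colour instead of many. Running the same construction from Corollary~\ref{thm_src_npc_cubic}, and additionally observing that adding a matching edge lengthens a path by one while crossing $M$ twice lengthens it by two (so distances inside $H_1$ are preserved and the routes above are shortest paths), yields the strong-rainbow analogue, Corollary~\ref{thm_src_npc_kregular}.
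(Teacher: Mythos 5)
Your proof is correct, and it rests on exactly the same key trick as the paper's: take several copies of the cubic instance $G^\Delta_\phi$ from Theorem~\ref{thm_rc_npc_cubic} and give \emph{all} inter-copy edges a single fresh colour, so that any rainbow path between two vertices of the distinguished copy would have to cross out and back on two equally coloured edges and therefore never leaves that copy. Where you differ is in the regularization gadget. The paper does it in one shot: it takes $k-2$ copies, identifies corresponding vertices by a common labelling, and turns each set of $k-2$ corresponding vertices into a clique coloured $c^*$, raising every degree from $3$ to $3+(k-3)=k$. You instead iterate a doubling step $k-3$ times, joining two copies of the current graph by a perfect matching in a fresh colour, raising the degree by one per round. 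Both are sound; your induction is arguably cleaner to verify (each step adds only a matching, and the ``rainbow paths stay inside one copy'' lemma is applied once per level), and your observation that crossing the matching twice costs $+2$ while distances inside a copy are preserved gives the strong-rainbow corollary with no extra work. The price is size: your final graph has $2^{k-3}\cdot|V(G^\Delta_\phi)|$ vertices versus the paper's $(k-2)\cdot|V(G^\Delta_\phi)|$. For the theorem as stated ($k$ a fixed constant defining the graph class) this is immaterial, but if one wanted a reduction uniform in $k$ (e.g.\ $k$ growing with the input) the paper's linear-in-$k$ construction would be preferable.
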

\begin{proof}
We assume the terminology of Theorem~\ref{thm_rc_npc_cubic}. Given a $3$\hyp{}\textsc{Occurrence} $3$\hyp{}\textsc{SAT} instance $\phi$, we construct a graph $G^*_\phi$, and its edge-coloring $\chi_*$.

We first construct $k-2$ copies of the cubic graph $G^\Delta_\phi$. Let us denote these copies as $G^\Delta_{\phi,h}$, where $h \in [k-2]$. Each $G^\Delta_{\phi,h}$ retains its original coloring as defined in Theorem~\ref{thm_rc_npc_cubic}. That is, each $G^\Delta_{\phi,h}$ has precisely the same coloring. Let us assign a labeling $1,\ldots,|V(G^\Delta_\phi)|$ on the vertices of $G^\Delta_\phi$, and use the same labeling for each $G^\Delta_{\phi,h}$. By $v_{h,l}$ we denote the vertex in subgraph $G^\Delta_{\phi,h}$ with the label $l$, where $l \in [|V(G^\Delta_\phi)|]$. We then form a clique between the vertices $v_{h,l}$ for each $h$ and $l$ by adding all possible ${k-2 \choose 2}$ edges. These newly added edges are precisely the uncolored edges of $G^*_\phi$, and all of them receive the fresh new color $c^*$. Because $G^\Delta_\phi$ is cubic, we can verify $G^*_\phi$ is now $k$-regular. This completes the construction of both $G^*_\phi$, and its edge-coloring $\chi_*$.

We will then show $G^*_\phi$ is rainbow connected if and only if $\phi$ is satisfiable. Recalling the naming of vertices from Theorem~\ref{thm_rc_npc_cubic}, without loss let us rename $s_1$ in $G^\Delta_{\phi,1}$ as $s$, and $t_0$ in $G^\Delta_{\phi,1}$ as $t$. First suppose $\phi$ is satisfiable. Then because there is a rainbow path between $s$ and each vertex of $G^\Delta_{\phi,1}$ by Theorem~\ref{thm_rc_npc_cubic}, the graph $G^*_\phi$ is rainbow connected. Finally, suppose $\phi$ is unsatisfiable. Observe that any rainbow path from $s$ to $t$ must only consist of edges in $G^\Delta_{\phi,1}$. But since $s$ and $t$ are not rainbow connected, it follows that $G^*_\phi$ is not rainbow connected. Thus, we have the theorem.
\end{proof}
Again, the following is immediate from the previous construction.
\begin{corollary}
\label{thm_src_npc_kregular}
\textsc{Strong rainbow connectivity} is $\NP$\hyp{}complete when restricted to the class of $k$-regular graphs, where $k > 3$.
\end{corollary}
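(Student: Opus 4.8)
The plan is to recycle, with no change whatsoever, the graph $G^*_\phi$ and its edge-coloring $\chi_*$ constructed in the proof of Theorem~\ref{thm_rc_npc_kregular}, and to argue that $G^*_\phi$ is strong rainbow connected under $\chi_*$ if and only if $\phi$ is satisfiable. Membership in $\NP$ is routine: a certificate is a family of paths, one for each pair of vertices, and one checks in polynomial time that each is simultaneously shortest and rainbow. For the easy direction, recall that every strong rainbow connected graph is rainbow connected; hence if $\phi$ is unsatisfiable, then by the proof of Theorem~\ref{thm_rc_npc_kregular} the vertices renamed $s$ and $t$ in $G^\Delta_{\phi,1}$ admit no rainbow path at all, so $G^*_\phi$ is not rainbow connected and therefore not strong rainbow connected either.

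The substance is the converse. First I would pin down distances in $G^*_\phi$. Keep the labeling $1,\dots,|V(G^\Delta_\phi)|$ of the vertices of each copy $G^\Delta_{\phi,h}$ from Theorem~\ref{thm_rc_npc_kregular}, write $v_{h,l}$ for the vertex of copy $h$ carrying label $l$, and let $d_\Delta$ denote distance inside $G^\Delta_\phi$. Since every $c^*$-edge of $G^*_\phi$ joins $v_{h,l}$ to $v_{h',l}$ with the \emph{same} label $l$, projecting any walk in $G^*_\phi$ onto its sequence of labels yields a walk in $G^\Delta_\phi$; consequently every $v_{h,l_1}$--$v_{h',l_2}$ walk contains at least $d_\Delta(l_1,l_2)$ non-$c^*$ edges, and at least one $c^*$-edge when $h\neq h'$. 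Matching this with the obvious upper bounds shows that $d_{G^*_\phi}(v_{h,l_1},v_{h',l_2})$ equals $d_\Delta(l_1,l_2)$ when $h=h'$ and equals $d_\Delta(l_1,l_2)+1$ when $h\neq h'$. In particular, adding the $c^*$-cliques creates no shortcut inside any single copy.

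Now suppose $\phi$ is satisfiable. By the observation preceding Corollary~\ref{thm_src_npc_cubic}, a satisfying assignment yields, between every pair of vertices of $G^\Delta_\phi$, a rainbow shortest path; since all $k-2$ copies carry identical colorings, these are also rainbow shortest paths inside each $G^\Delta_{\phi,h}$. Given two vertices $v_{h,l_1}$ and $v_{h',l_2}$ of $G^*_\phi$: if $h=h'$, a rainbow shortest $l_1$--$l_2$ path inside copy $h$ has length $d_\Delta(l_1,l_2)$ and is therefore a rainbow shortest path of $G^*_\phi$; if $h\neq h'$ and $l_1=l_2$, the single $c^*$-edge $v_{h,l_1}v_{h',l_2}$ is itself a rainbow shortest path; and if $h\neq h'$ and $l_1\neq l_2$, a rainbow shortest $l_1$--$l_2$ path inside copy $h$ followed by the $c^*$-edge $v_{h,l_2}v_{h',l_2}$ has length $d_\Delta(l_1,l_2)+1$, hence is shortest, and stays rainbow because $c^*$ is a fresh color occurring on no edge inside any copy. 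Thus $G^*_\phi$ is strong rainbow connected, completing the reduction.

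The one delicate point, and the place I expect to spend the most care, is the distance computation of the second paragraph: everything hinges on the $c^*$-cliques not shortening intra-copy distances, so that the rainbow shortest paths inherited from the cubic gadget of Theorem~\ref{thm_rc_npc_cubic} remain shortest in $G^*_\phi$. The projection-onto-labels argument settles this cleanly, and the remaining checks are mechanical.
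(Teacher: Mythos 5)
Your proposal is correct and matches the paper's approach: the paper declares the corollary ``immediate from the previous construction,'' relying on exactly the facts you verify, namely that the $c^*$-cliques do not shorten intra-copy distances and that the rainbow shortest paths guaranteed by the observation preceding Corollary~\ref{thm_src_npc_cubic} (possibly extended by one fresh $c^*$-edge) remain shortest and rainbow in $G^*_\phi$. You have simply written out the details the paper leaves implicit.
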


\section{Polynomial time solvable cases}
In this section, we consider \textsc{Strong rainbow connectivity} from a structural perspective. We observe some graph classes for which the problem is easy. We begin by showing bounding the diameter of the input graph makes \textsc{Strong rainbow connectivity} tractable, while this not so for \textsc{Rainbow connectivity}~\citep{Uchizawa2013}.

\begin{theorem}
\label{thm_src_bounded_diam}
\textsc{Strong rainbow connectivity} is solvable in $O(n^{d+3})$ time for graphs of bounded diameter~$d \geq 1$, where $n$ is the number of vertices in the input graph.
\end{theorem}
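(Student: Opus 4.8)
The plan is a straightforward exhaustive search, made polynomial by the fact that a bounded diameter bounds the length of every shortest path. The key observation is that if $\diam(G)\le d$, then $d_G(u,v)\le d$ for every pair $u,v$, so any shortest path between $u$ and $v$ has at most $d+1$ vertices and hence at most $d-1$ internal vertices. Consequently every shortest path between a fixed pair appears among the at most $n^{d-1}$ ordered tuples of distinct ``internal'' vertices, and checking all of them is affordable when $d$ is fixed.

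Concretely, I would proceed as follows. (i) Run a breadth-first search from each vertex to obtain $d_G(u,v)$ for all pairs, rejecting at once if $G$ is disconnected; this costs $O(n(n+m))=O(n^3)$ time since $m=O(n^2)$. (ii) For each of the $\binom{n}{2}$ pairs $\{u,v\}$, set $\ell=d_G(u,v)$ and enumerate every ordered tuple $(w_1,\dots,w_{\ell-1})$ of distinct vertices of $V\setminus\{u,v\}$ — at most $n^{\ell-1}\le n^{d-1}$ of them — testing in $O(d)$ time whether $u,w_1,\dots,w_{\ell-1},v$ is actually a path (consecutive vertices adjacent) and, if so, in $O(d^2)$ time whether its $\ell$ edges receive pairwise distinct colours. (iii) Accept if and only if for every pair some enumerated tuple yielded a rainbow path. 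Correctness is immediate from the definition of strong rainbow connectivity together with the observation that a rainbow path between $u$ and $v$ of length exactly $\ell=d_G(u,v)\le d$ has at most $d-1$ internal vertices and is therefore among the tuples considered. For the running time, the main loop costs $O(n^2\cdot n^{d-1}\cdot d^2)=O(d^2 n^{d+1})$; bounding $d\le n-1$ crudely gives $O(n^{d+3})$, which also dominates the $O(n^3)$ preprocessing because $d\ge 1$, yielding the claimed bound.

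There is essentially no serious obstacle here; the only point requiring care is that we must exhibit a rainbow path of length \emph{exactly} $d_G(u,v)$ rather than merely a short rainbow path, which is why the pairwise distances are precomputed and the enumeration for each pair is restricted to tuples of its exact internal length. Minor optimisations — an $O(d)$ rainbow test using a colour bitmap, or halting the per-pair search at the first rainbow shortest path found — are possible but do not affect the stated order. Finally, it is worth stressing that the approach genuinely relies on $d$ being bounded: for unbounded diameter the number of candidate tuples is $n^{\Theta(d)}$, no longer polynomial, which is consistent with \textsc{Strong rainbow connectivity} being $\NP$\hyp{}complete on classes such as series-parallel graphs whose instances have unbounded diameter.
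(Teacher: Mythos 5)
Your proposal is correct and follows essentially the same route as the paper: both arguments bound the number of candidate shortest $u$-$v$ paths by enumerating the at most $n^{d-1}$ tuples of internal vertices, check each candidate for being a rainbow path of length exactly $d_G(u,v)$, and repeat over all $\binom{n}{2}$ pairs to obtain the $O(n^{d+3})$ bound. Your write-up is in fact slightly more careful than the paper's, since you make the BFS precomputation of distances and the adjacency/colour checks explicit.
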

\begin{proof}
For $d=1$, the problem is trivial. So suppose $d \geq 2$, and let $n$ denote the number of vertices in $G$. Let $u$ and $v$ be two arbitrary vertices of $G$, and let $P = ut_1t_2 \cdots t_{d-1}v$ be a shortest path from $u$ to $v$. Because there are less than $n$ choices for each $t_i$ where $i \in [d-1]$, it follows that there are at most $n^d$ shortest $u$-$v$ paths of length no more than $d$. We can then check all of these paths of length exactly $d(u,v)$, and verify if at least one such path is rainbow. Clearly, it takes $O(d)$ time to check one path. Because we have ${n \choose 2}$ pairs of vertices to check and $d$ is fixed, it follows that \textsc{Strong rainbow connectivity} can be decided in $O(n^{d+3})$ time for graphs of bounded diameter.
\end{proof}

If a graph $G$ has exactly one shortest path between any pair of vertices, $G$ is said to be \textit{geodetic}. A graph is \textit{$k$-geodetic} if there are at most $k$ shortest paths between any pair of vertices. In fact, it is an easy observation that the brute-force algorithm that checks every shortest path between a pair of vertices runs in polynomial time for $k$-geodetic graphs.
\begin{theorem}
\label{thm_src_easy_kgeodetic}
\textsc{Strong rainbow connectivity} is solvable in polynomial time when restricted to the class of $k$-geodetic graphs, where $k = O(\poly(n,m))$, and $n$ and $m$ are the number of vertices and edges in the input graph, respectively.
\end{theorem}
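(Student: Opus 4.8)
The plan is to run, for every pair of vertices, a brute-force check over all shortest paths joining them, and to argue that enumerating those paths stays within the claimed time bound precisely because the graph is $k$-geodetic. Concretely, fix a source vertex $u$ and perform a breadth-first search from $u$, recording $d(u,v)$ for every $v$. Orient an edge $(w,v)$ as $w \to v$ whenever $d(u,w)+1 = d(u,v)$; the resulting directed acyclic graph $D_u$ has the property that its directed $u$-$v$ paths are exactly the shortest $u$-$v$ paths of $G$.

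Next I would compute, for each vertex $v$ in nondecreasing order of $d(u,v)$, the explicit list $\mathcal{P}_v$ of all shortest $u$-$v$ paths, via the recurrence $\mathcal{P}_u = \{(u)\}$ and $\mathcal{P}_v = \bigcup_{w \,:\, w \to v \text{ in } D_u} \{\, P \cdot v : P \in \mathcal{P}_w \,\}$. The crucial observation is that $|\mathcal{P}_v|$ equals the number of shortest $u$-$v$ paths in $G$, which is at most $k$ by $k$-geodeticity; hence every intermediate list has size at most $k$ and the recurrence never blows up. Since each such path has length at most $n$, building all the lists $\mathcal{P}_v$ for a single source costs $O(knm)$ time, and iterating over all $n$ choices of source yields $O(kn^2m)$ time to have in hand every shortest path of $G$ between every pair of vertices.

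Finally, for each such path I would test whether it is rainbow by scanning its at most $n$ edges and checking that no color repeats, using an auxiliary array indexed by color; this is $O(n)$ per path, hence $O(kn^3)$ over all pairs and all their shortest paths. The graph is strong rainbow connected under $\chi$ if and only if, for every pair, at least one of its shortest paths passes this test. The total running time is $O(kn^2m + kn^3)$, which is polynomial in $n$ and $m$ whenever $k = O(\poly(n,m))$; correctness is immediate from the definition of strong rainbow connectivity, and this also places the problem in $\P$ on the class of $k$-geodetic graphs.

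The only real subtlety — and the step I would be most careful with — is the claim that the dynamic program over $D_u$ remains polynomial: a priori a DAG can have exponentially many source-to-sink paths, so it is essential to use the $k$-geodetic promise to bound $|\mathcal{P}_v|$ at \emph{every} vertex $v$, not merely at the target. If one prefers to avoid invoking the promise, the same effect is obtained by a standard polynomial-delay enumeration of the $u$-$v$ paths in $D_u$, aborting as soon as more than $k$ paths are produced. Everything else — the BFS, the correctness of the shortest-path DAG $D_u$, and the linear-time rainbow test — is routine.
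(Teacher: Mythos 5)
Your proposal is correct and follows essentially the same route as the paper, which simply asserts that the brute-force check of all shortest paths between each pair runs in polynomial time on $k$-geodetic graphs and gives no further detail. You supply the details the paper omits (the shortest-path DAG, the dynamic program, and the observation that $k$-geodeticity bounds every intermediate path list, not just the final one), and these details are sound.
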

As shown by~\citet{Stemple1968}, a connected graph $G$ is geodetic if and only if every block of $G$ is geodetic. By observing that a complete graph is geodetic, we get the following corollary.
\begin{corollary}
\label{corollary_src_easy_for_block_graphs}
\textsc{Strong rainbow connectivity} is solvable in polynomial time when restricted to the class of block graphs.
\end{corollary}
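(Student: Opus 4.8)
The plan is to obtain the corollary directly from Theorem~\ref{thm_src_easy_kgeodetic} together with the characterization of geodetic graphs due to~\citet{Stemple1968} quoted above. The single fact that needs to be verified is that every connected block graph is geodetic; granting this, the corollary follows by applying Theorem~\ref{thm_src_easy_kgeodetic} with $k = 1$, since $1 = O(\poly(n,m))$. We may assume the input graph is connected, as a disconnected graph is trivially not strong rainbow connected, and Stemple's theorem is stated for connected graphs.

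First I would recall that in a block graph every block is a clique, and observe that a complete graph is geodetic: any two of its vertices are adjacent, so the single edge joining them is the unique shortest path between them. Hence every block of a block graph is geodetic. Then I would invoke the theorem of~\citet{Stemple1968}, which asserts that a connected graph is geodetic if and only if all of its blocks are geodetic; the ``blocks geodetic $\Rightarrow$ graph geodetic'' direction immediately gives that the entire block graph is geodetic, i.e.\ $1$-geodetic.

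Finally, a $1$-geodetic graph is in particular $k$-geodetic for $k = 1 = O(\poly(n,m))$, so Theorem~\ref{thm_src_easy_kgeodetic} applies: the brute-force procedure that, for each of the $\binom{n}{2}$ pairs of vertices, computes the unique shortest path between them and tests whether it is rainbow, runs in polynomial time and correctly decides \textsc{Strong rainbow connectivity}. This completes the proof.

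I do not anticipate a real obstacle here; the only points that call for a little care are citing the correct direction of Stemple's characterization (we need that all blocks geodetic forces the whole graph geodetic) and disposing of the degenerate disconnected case before appealing to a statement phrased for connected graphs.
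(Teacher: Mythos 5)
Your proposal is correct and matches the paper's argument exactly: the paper also notes that a complete graph is geodetic, invokes Stemple's characterization to conclude that block graphs are geodetic, and then applies Theorem~\ref{thm_src_easy_kgeodetic} with $k=1$. Your extra care about connectivity and the direction of Stemple's equivalence is fine but does not change the route.
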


Finally, we make make some observations about the reductions built in this work, and describe consequences for parameterized complexity. It follows from the work of~\citet{Uchizawa2013} that both \textsc{Rainbow connectivity} and \textsc{Strong rainbow connectivity} remain $\NP$-complete when parameterized by \textit{treewidth}. Informally, treewidth is a measure of how close a graph is to being a tree. \textit{Pathwidth} of a graph measures the closeness to a path. Pathwidth of a graph $G$ can be defined to being one less than the maximum clique size in an interval supergraph of $G$. The interval outerplanar graph we construct in Theorem~\ref{thm_rc_is_hard_for_chordal} has maximum clique size 3. It follows both \textsc{Rainbow connectivity} and \textsc{Strong rainbow connectivity} are $\NP$-complete for graphs of pathwidth 2. But we can be slightly more general, and show hardness for graphs of pathwidth $p \geq 2$. To see this, observe we can connect a clique of size at least 3 to the graph constructed in Theorem~\ref{thm_rc_is_hard_for_chordal}, and color its edges with a fresh new color. This might break the property of being outerplanar, but the graph definitely remains interval. Thus, we observe the following.
\begin{corollary}
\label{thm_hardness_pw}
Both \textsc{Rainbow connectivity} and \textsc{Strong rainbow connectivity} are $\NP$-complete for graphs of pathwidth $p$, for every $p \geq 2$.
\end{corollary}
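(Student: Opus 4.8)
The plan is to reuse the interval outerplanar instance of Theorem~\ref{thm_rc_is_hard_for_chordal} almost verbatim, inflating its pathwidth to any prescribed value $p$ by gluing on a single large clique in a way that cannot affect rainbow or strong rainbow connectivity. Membership in $\NP$ is the same easy observation made for Theorem~\ref{thm_rc_is_npc_for_planar_graphs} (a system of colored paths, one per pair, is a certificate), so I only need to produce the reductions.

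Fix $p \geq 2$ and let $\phi$ be a $3$\hyp{}\textsc{Occurrence} $3$\hyp{}\textsc{SAT} formula. Build $G^M_\phi$ with the coloring $\chi_M$ of Theorem~\ref{thm_rc_is_hard_for_chordal}; since $G^M_\phi$ is interval with maximum clique size $3$, its pathwidth is $2$, which settles $p=2$. For $p > 2$, introduce $p$ new vertices $z_1,\dots,z_p$, add all edges among $\{t, z_1,\dots,z_p\}$ (so this set becomes a clique $K$ on $p+1$ vertices, where $t$ is the pendant vertex of $G^M_\phi$ incident to the bridge $(q'_m,t)$), and colour every edge of $K$ with one fresh colour $c_K$ not occurring in $\chi_M$. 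Call the resulting edge-coloured graph $G'_\phi$ with coloring $\chi'$.

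First I would verify the width. The vertex $t$ has degree one in $G^M_\phi$, so its unique block is the $K_2$ on $\{q'_m,t\}$, which may be taken as an endpoint of a clique path of $G^M_\phi$; appending $K$ as a new endpoint adjacent to this $K_2$ yields a clique path of $G'_\phi$ (the only old maximal clique meeting $K$ is $\{q'_m,t\}$, and it meets $K$ exactly in $\{t\}$). Hence $G'_\phi$ is interval, and its pathwidth is $\omega(G'_\phi)-1 = \max\{3, p+1\} - 1 = p$, since $p\ge 2$. It remains to show $G'_\phi$ is rainbow (resp.\ strong rainbow) connected under $\chi'$ if and only if $\phi$ is satisfiable; by Theorem~\ref{thm_rc_is_hard_for_chordal} and Corollary~\ref{thm_src_is_npc_for_chordal} it is enough to prove $G'_\phi$ has the property exactly when $G^M_\phi$ does. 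Because $t$ is a cut vertex separating $\{z_1,\dots,z_p\}$ from $V(G^M_\phi)$, every path between two old vertices stays inside $G^M_\phi$, so distances and rainbow (shortest) paths among old vertices are unchanged, and the ``only if'' direction follows by restriction. Conversely, assume $G^M_\phi$ has the property: any two $z_i,z_j$ are joined by their connecting edge, a rainbow shortest path; and for $z_i$ and an old vertex $v$ we have $d(z_i,v) = 1 + d(t,v)$ with every shortest $z_i$--$v$ path equal to the edge $z_it$ followed by a shortest $t$--$v$ path, so prepending the edge $z_it$ (colour $c_K$) to a rainbow (shortest) $t$--$v$ path of $G^M_\phi$ gives a rainbow (shortest) $z_i$--$v$ path, as $c_K$ is absent from $\chi_M$. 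Thus $G'_\phi$ inherits (strong) rainbow connectivity, and both reductions are complete.

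The construction carries essentially no difficulty; the one place to be careful is the width bookkeeping — checking that attaching $K$ at the pendant vertex $t$ keeps the graph interval, so that its pathwidth is exactly $p$ rather than merely at least $p$ — together with the remark that the cut vertex $t$ preserves shortest rainbow paths (not just rainbow paths) among the original vertices, which is precisely what lets the same padding work for \textsc{Strong rainbow connectivity}.
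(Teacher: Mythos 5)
Your proposal is correct and follows essentially the same route as the paper: take the interval outerplanar instance of Theorem~\ref{thm_rc_is_hard_for_chordal} (pathwidth $2$ since it is interval with clique number $3$) and, for larger $p$, glue on a clique whose edges all get one fresh colour, noting the graph stays interval so its pathwidth is exactly $p$. The paper states this in one line without fixing the attachment point or spelling out the correctness of the padding; your version, which anchors the clique at the pendant vertex $t$ and checks both the clique-path and the preservation of (shortest) rainbow paths, is just a more careful writeup of the same argument.
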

By the result of~\citet{Kaplan1996}, the \textit{bandwidth} of a graph $G$ is one less than the maximum clique size of any \textit{proper interval} supergraph of $G$, chosen to minimize its clique number. Proper interval graphs are exactly the \textit{claw-free} interval graphs~\citep{Roberts1969}, where a claw is the complete bipartite graph $K_{1,3}$. The interval outerplanar graph we construct in Theorem~\ref{thm_rc_is_hard_for_chordal} can be observed to be claw-free. Combining this observation with the argument above, we can again be slightly more general.
\begin{corollary}
Both \textsc{Rainbow connectivity} and \textsc{Strong rainbow connectivity} are $\NP$-complete for graphs of bandwidth $b$, for every $b \geq 2$.
\end{corollary}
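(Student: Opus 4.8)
The plan is to reuse the interval outerplanar hardness instance $G^M_\phi$ of Theorem~\ref{thm_rc_is_hard_for_chordal} and to exploit two structural facts: by~\citet{Kaplan1996} the bandwidth of a graph $H$ equals $\min\{\omega(H')-1\}$ over all proper interval supergraphs $H'$ of $H$, and by~\citet{Roberts1969} the proper interval graphs are exactly the claw-free interval graphs. So the first step is to verify that $G^M_\phi$ is claw-free. I would do this by checking that the open neighbourhood of every vertex of $G^M_\phi$ can be covered by at most two cliques, which already forbids three pairwise non-adjacent neighbours. Inside a variable gadget $X^M_i$ this is guaranteed by the added chords $(u_i,\overline u_i)$, $(u_i,\overline v_i)$, $(v_i,\overline v_i)$; inside a clause gadget $C^M_j$ by the chords $(r_{j,1},p'_j)$, $(r_{j,2},r'_{j,1})$, $(r_{j,3},r'_{j,2})$, $(q_j,r'_{j,3})$; and a vertex incident to one of the bridges of $G^M_\phi$ (the edges of $U$, the edges $(q'_j,p_{j+1})$ and $(q'_m,t)$, the edge $(s_m,a_1)$, and the edges on the path $s_1,\dots,s_m$) merely gains one extra neighbour, covered by one more trivial clique. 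Combined with the fact that $G^M_\phi$ is interval and has clique number $3$ (Theorem~\ref{thm_rc_is_hard_for_chordal}), this makes $G^M_\phi$ a proper interval graph of bandwidth $2$.

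For a general $b \ge 2$ I would attach a clique $K_{b+1}$ to $G^M_\phi$ by identifying one of its vertices with the pendant vertex $s_1$ (the pendant $t$ works just as well), colour all $\binom{b+1}{2}$ new edges with one fresh colour $c^\ast$ not appearing elsewhere, and call the result $(G',\chi')$. Appending a clique at an end of the path clique tree keeps $G'$ interval, and claw-freeness survives: the $b$ new neighbours of $s_1$ are mutually adjacent, so at most one of them lies in any independent set of neighbours of $s_1$, every new vertex has a complete neighbourhood, and no old vertex gains a neighbour. Hence $G'$ is proper interval with $\omega(G') = \max(3,b+1) = b+1$, which gives bandwidth at most $b$ by~\citet{Kaplan1996}; since $K_{b+1} \subseteq G'$, every proper interval supergraph of $G'$ has clique number at least $b+1$, so the bandwidth of $G'$ is exactly $b$ (this also covers $b=2$, where $G'$ still has clique number $3$).

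It remains to move (strong) rainbow connectivity across the attachment, and the key observations are that $s_1$ is a cut vertex of $G'$ and that $c^\ast$ is used on no edge of $G^M_\phi$. Since $s_1$ is pendant in $G^M_\phi$, attaching $K_{b+1}$ creates no new shortest path between two vertices of $G^M_\phi$, so such pairs behave exactly as under $\chi_M$; two vertices of $K_{b+1}$ are joined by a single edge, which is a rainbow shortest path; and for a clique vertex $w$ and a vertex $z$ of $G^M_\phi$, every (shortest) $w$-$z$ path consists of the edge $ws_1$ of colour $c^\ast$ followed by a (shortest) $s_1$-$z$ path inside $G^M_\phi$, so it is rainbow whenever the $s_1$-$z$ part is. Therefore $(G',\chi')$ is rainbow connected, respectively strong rainbow connected, if and only if $(G^M_\phi,\chi_M)$ is, which by Theorem~\ref{thm_rc_is_hard_for_chordal} and Corollary~\ref{thm_src_is_npc_for_chordal} holds if and only if $\phi$ is satisfiable. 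Since both problems are clearly in $\NP$, this yields $\NP$-completeness on graphs of bandwidth $b$.

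The laborious part is the finite case analysis showing $G^M_\phi$ is claw-free; the one point that needs care in the reduction itself is to glue the new clique onto a pendant vertex, so that its neighbourhood cannot later host an induced claw. Everything else --- interval-ness of $G'$, the bandwidth value via~\citet{Kaplan1996}, and the routing of (shortest) paths through the bridge $ws_1$ --- is routine. Clauses with one or two literals are handled by the corresponding Appendix gadgets of Theorem~\ref{thm_rc_is_hard_for_chordal}, which are claw-free by the same check.
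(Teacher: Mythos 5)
Your proof is correct and follows essentially the same route as the paper: the paper likewise observes that the graph $G^M_\phi$ of Theorem~\ref{thm_rc_is_hard_for_chordal} is claw-free and interval with clique number~$3$, invokes the characterizations of bandwidth via proper interval supergraphs and of proper interval graphs as claw-free interval graphs, and attaches a clique of size $b+1$ colored with a fresh color to reach bandwidth $b>2$. You are somewhat more careful than the paper on one point it leaves implicit --- gluing the new clique onto a pendant vertex such as $s_1$ so that claw-freeness, and hence the proper interval property and the bandwidth bound, is actually preserved --- but this is a detail of the same construction rather than a different argument.
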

Recall a problem is said to be in $\XP$ if it can be solved in $O(n^{f(k)})$ time, where $n$ is the input size, $k$ a parameter, and $f$ some computable function. Theorem~\ref{thm_src_bounded_diam} proves \textsc{Strong rainbow connectivity} is in $\XP$ when parameterized by the diameter of the graph. This implies the problem is in $\XP$ for several other parameters, such as domination number, independence number, minimum clique cover, distance to clique, distance to cograph, distance to co-cluster, vertex cover number, distance to cluster, and cluster editing (see e.g.\ \citep{Komusiewicz2012} for a relationship of some parameters). Corollary~\ref{thm_hardness_pw} extends the known hardness barrier from treewidth to pathwidth. Pathwidth is upper bounded by \textit{tree-depth}, which is informally a measure of how close a graph is to being a star (that is, the $K_{1,n}$). As shown by~\citet{Nesetril2008}, the length of a longest path for every undirected graph $G$ is upper bounded by $2^{\td(G)}-2$, where $\td(G)$ denotes the tree-depth of $G$. Combining this result with Theorem~\ref{thm_src_bounded_diam}, we obtain the following.
\begin{corollary}
Both \textsc{Rainbow connectivity} and \textsc{Strong rainbow connectivity} are in $\XP$ when parameterized by tree-depth.
\end{corollary}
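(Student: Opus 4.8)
The plan is to derive both algorithms from the bound of \citet{Nesetril2008}: in every undirected graph $G$ the length of a longest path is at most $2^{\td(G)}-2$. Write $L = 2^{\td(G)}-2$. The key observation is that this bound applies verbatim to \emph{every} rainbow path and to \emph{every} shortest path in $G$, since each of these is in particular a path and hence has at most $L$ edges; in particular a rainbow $u$-$v$ path, or a shortest $u$-$v$ path, has at most $L-1$ internal vertices.

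For \textsc{Strong rainbow connectivity} this immediately reduces the statement to Theorem~\ref{thm_src_bounded_diam}. Since $\diam(G)$ is at most the length of a longest path, we have $\diam(G) \le L = 2^{\td(G)}-2$ whenever $G$ is connected (and a disconnected $G$ is trivially not strong rainbow connected). Applying Theorem~\ref{thm_src_bounded_diam} with $d = 2^{\td(G)}-2$ then yields an algorithm running in $O(n^{d+3}) = O(n^{2^{\td(G)}+1})$ time, which is of the form $n^{f(\td(G))}$ and hence places the problem in $\XP$ parameterized by tree-depth.

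For \textsc{Rainbow connectivity}, where no ready-made theorem is available, I would make the analogous brute force explicit. For each of the $\binom{n}{2}$ pairs $\{u,v\}$, enumerate every sequence of at most $L-1$ vertices to play the role of the internal vertices of a candidate $u$-$v$ path; there are at most $n^{L}$ such sequences. For each one, test in $O(L)$ time whether it yields a simple $u$-$v$ path and, if so, whether that path is rainbow. Because any rainbow $u$-$v$ path has at most $L$ edges, this search finds one whenever it exists, so the procedure correctly decides rainbow connectivity; its running time is $n^{O(2^{\td(G)})}$, again of the form $n^{f(\td(G))}$.

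Both ingredients --- invoking Theorem~\ref{thm_src_bounded_diam} and running a bounded-length path enumeration --- are routine, so I do not expect a serious obstacle; the one point worth stating with care is exactly why the longest-path bound of \citet{Nesetril2008} transfers to rainbow paths and to shortest paths. A minor bookkeeping issue is that the algorithms need the number $L$ (equivalently $\td(G)$) to fix their search depth, but this does not affect membership in $\XP$: tree-depth may be taken to be supplied with the instance, or computed within a running time depending only on $\td(G)$.
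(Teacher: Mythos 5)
Your proposal is correct and takes essentially the same route as the paper, which likewise bounds the diameter by the longest-path bound $2^{\td(G)}-2$ and invokes Theorem~\ref{thm_src_bounded_diam}. You are in fact more explicit than the paper about the \textsc{Rainbow connectivity} half, where the bounded-length enumeration of candidate paths (valid because rainbow paths are simple and hence subject to the same longest-path bound) is left implicit.
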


\section*{Acknowledgements}
The author thanks Henri Hansen, Mikko Lauri, and Keijo Ruohonen for helpful comments, and acknowledges the idea of Henri Hansen that led to Theorem~\ref{thm_rc_npc_kregular}. The author also thanks the referees for their useful comments.

\bibliographystyle{model1-num-names}
\bibliography{bibliography}

\section*{Appendix}
In Section~2, we presented a reduction from the $3$\hyp{}\textsc{Occurrence} $3$\hyp{}\textsc{SAT} problem to \textsc{Rainbow connectivity} due to~\citet{Uchizawa2013}. In this appendix, we give the missing critical details of their proof, as discussed in the beginning of the section. Namely, we show how clause gadgets corresponding to clauses of size one and two can be built in Theorem~\ref{thm_rc_is_npc_for_planar_graphs}. For completeness, we describe similar gadgets for Theorems~\ref{thm_rc_is_hard_for_chordal}, \ref{thm_rc_is_npc_for_block}, and \ref{thm_rc_npc_cubic}.

The clause gadgets for four different graph classes are shown in Figure~\ref{fig_clause_gadgets}. The first column denotes the graph class. The second column shows a clause gadget corresponding to a clause containing one literal, while the third column does the same for a clause having two literals. For clarity, the edges denoted by thin lines having no labels on row three correspond to chords colored with the color $c'_j$ (refer to Theorem~\ref{thm_rc_is_npc_for_block} for details). See the respective theorems for an explanation of other colors appearing on the edges.

\begin{figure}
\begin{tikzpicture}[scale=2]
    \matrix[row sep=0.2cm,column sep=0.1cm]
    { 
    \node[draw=none,fill=none,inner sep=0pt] {Bipartite outerplanar}; &    
    \node(i1){\includegraphics[scale=1]{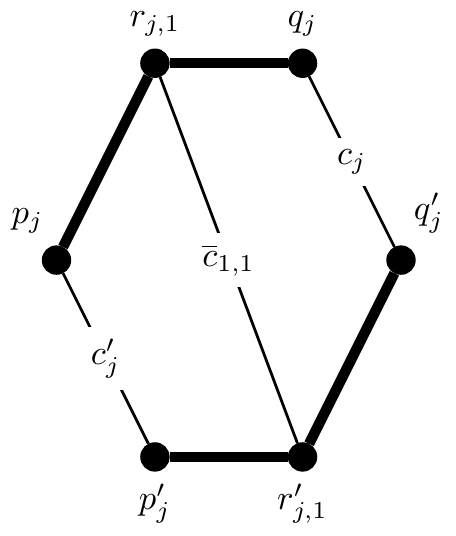}}; &
    \node(i2){\includegraphics[scale=1]{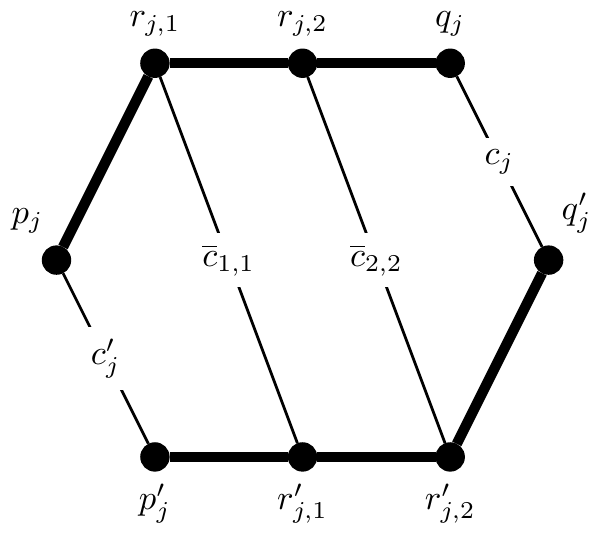}}; \\
 	
    \node[draw=none,fill=none,inner sep=0pt] {Interval outerplanar}; &
    \node(i3){\includegraphics[scale=1]{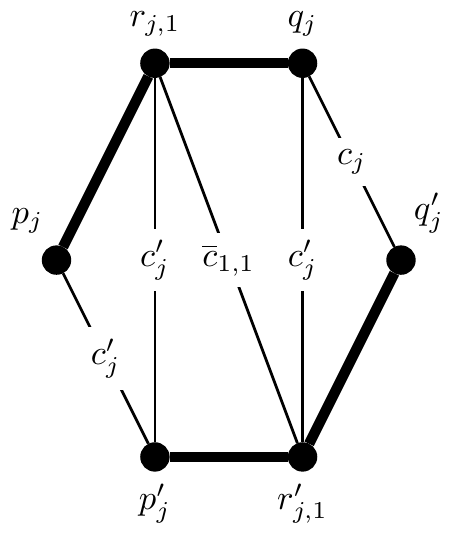}}; & 
 	\node(i4){\includegraphics[scale=1]{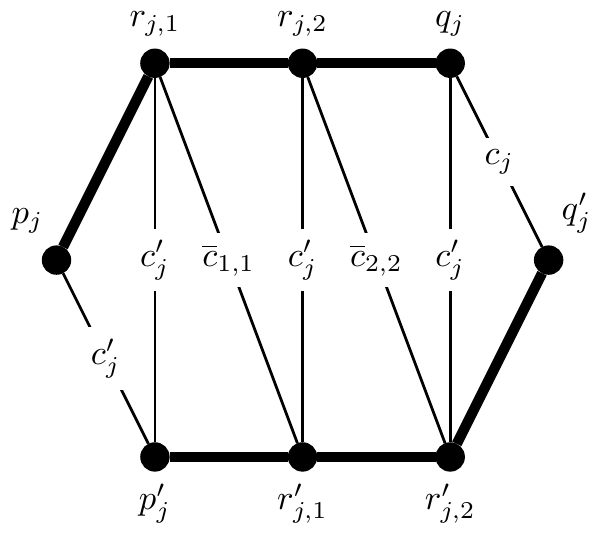}}; \\
 	
 	\node[draw=none,fill=none,inner sep=0pt] {Interval block}; &
    \node(i5){\includegraphics[scale=1]{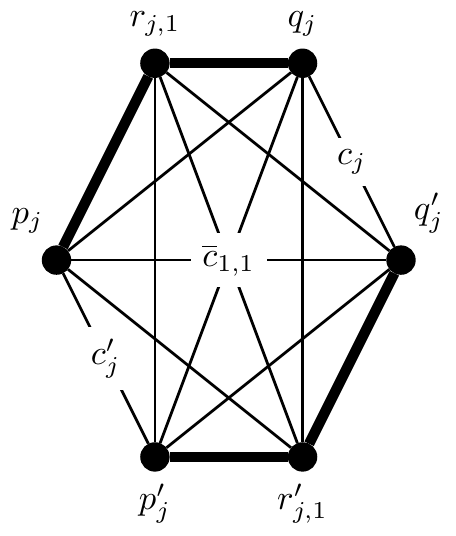}}; &
    \node(i6){\includegraphics[scale=1]{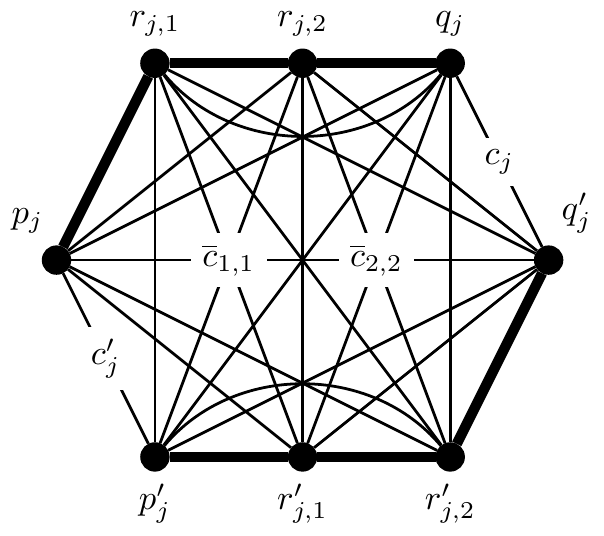}}; \\

	\node[draw=none,fill=none,inner sep=0pt] {Cubic}; &
	\node(i7){\includegraphics[scale=1]{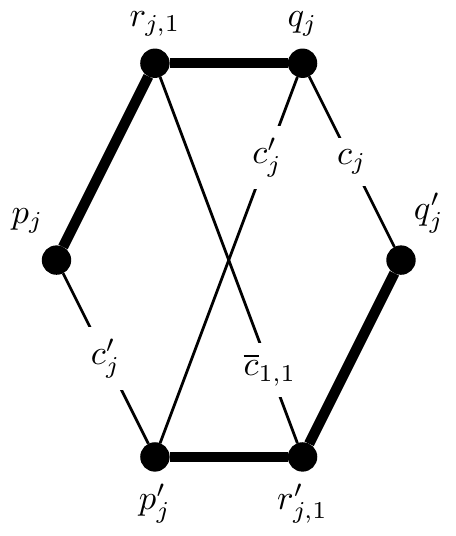}}; &
	\node(i8){\includegraphics[scale=1]{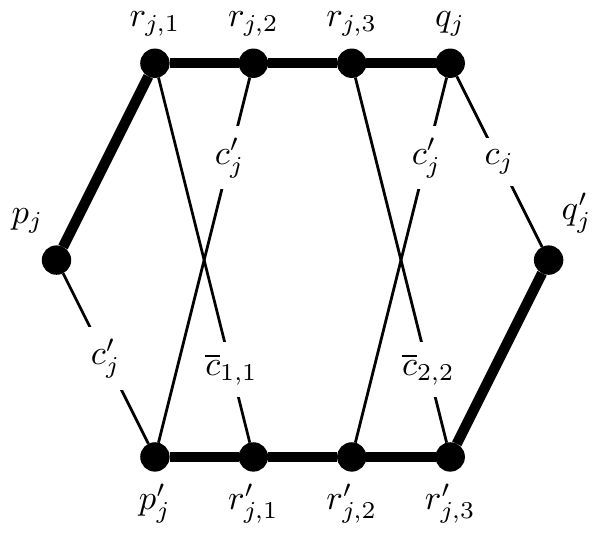}}; \\
 	};
\end{tikzpicture}
\caption{Clause gadgets corresponding to clauses of size one and two for different graph classes.}
\label{fig_clause_gadgets}
\end{figure}

\end{document}